\documentclass{llncs}[11pt]
\usepackage{graphicx}
\usepackage{amssymb,amsmath}
\usepackage{psfrag}
\usepackage{fullpage}
\pagestyle{plain}

\usepackage[noend]{algorithmic}
\usepackage{algorithm,verbatim}

\newcommand{\REM}[1]{}


\newcommand {\mStar} {{m}^{\ast}}
\newcommand {\mStarV} {{m}^{\ast}_{v}}
\newcommand {\mStarX} {{m}^{\ast}_{x}}
\newcommand {\mVmax} {{\bar{m}}^{\ast}}

\newcommand {\weight} {{\bf {w}}}
\newcommand {\bc} {\mbox{BC}}
\newcommand {\DAG} {\mbox{DAG}}
\newcommand {\newpaths} {\widehat{\sigma}'_{sv}}

\newcommand {\A} {\mathcal{A}}

\newcommand{\Grev}{G_R}

\newcommand{\Ein}{E_i}

\newcommand{\Eout}{E_o}

\usepackage{url}
\urldef{\mailid}\path|{meghana,cavia,vlr}@cs.utexas.edu|

\title{Betweenness Centrality -- Incremental and Faster
\thanks{ This work was supported in part by NSF grant CCF-0830737}
}

\author {Meghana Nasre~\thanks{Indian Institute of Technology Madras, India. Email: {\tt meghana@cse.iitm.ac.in}}
\and Matteo Pontecorvi~\thanks{University of Texas at Austin, USA. Email: {\tt cavia@cs.utexas.edu}}
\and Vijaya Ramachandran~\thanks{University of Texas at Austin, USA. Email: {\tt vlr@cs.utexas.edu}}
}
\institute{}
\begin{document}
\maketitle
\begin{abstract}
We consider the incremental computation of the betweenness centrality (BC) of
all vertices in a graph $G = (V, E)$, 
directed or undirected, with positive real edge-weights.
The current widely used algorithm 
is the Brandes algorithm that runs in $O(mn + n^2 \log n)$ 
time, where $n = |V|$ and $m = |E|$. 

 We present an incremental algorithm that updates the BC
score of all vertices in $G$ when a new edge is added to $G$, or the weight of 
an existing edge is reduced. Our incremental algorithm runs in $O(m' n + n^2)$ 
time, 
where $m'$ is bounded by $m^*=|E^*|$, and $E^*$ is
the set of edges that lie on a shortest path in $G$. 
We achieve the same bound for the more general incremental update of a 
vertex $v$, where the edge update can be performed on any subset of edges
incident to $v$.

Our incremental algorithm is the first algorithm that is asymptotically 
faster on sparse graphs than recomputing with the Brandes algorithm even for a single edge update.
It is also likely to be much faster than the Brandes algorithm 
on dense graphs since $m^*$ is  often close to linear 
in $n$.

 Our incremental algorithm is very simple, and we
give an efficient cache-oblivious implementation that incurs
$O(scan(n^2) + n \cdot sort (m'))$ cache misses, where $scan$ and
$sort$ are well-known measures for efficient caching. 
We also give a static BC algorithm that runs in time 
$O(m^* n + n^2 \log n)$, which is faster
than the Brandes algorithm on any graph with $m= \omega(n \log n)$
and $m^* = o(m)$.

\end{abstract} 

\section{Introduction}\label{sec:intro}

Betweenness centrality (BC) is a widely-used measure 
in the analysis of large complex networks.  The BC of a node $v$ 
in a network is the fraction of all shortest paths in the network that go through $v$,
and this measure is often used as an index that  determines the relative importance of
$v$ in the network.
Some applications of 
BC
include analyzing social interaction networks \cite{KA12},
identifying lethality in biological networks \cite{PCW05},
and identifying key actors in terrorist networks \cite{Krebs02,Coffman}. 

Given the changing nature of the networks under consideration,
it is desirable to have algorithms that compute 
BC
faster than computing it from scratch after every change. Our main contribution
is the first incremental algorithm for computing 
BC
after an incremental update on an edge or on a vertex
that is provably faster on sparse graphs than the widely used 
static algorithm by Brandes~\cite{Brandes01}.  
By an {\it incremental update} on an edge $(u,v)$ we mean a decrease in
the weight of an existing edge $(u,v)$, or the addition of a new edge 
$(u,v)$ with finite weight if $(u,v)$ is not present in the graph;
in an incremental vertex update, updates can occur on any subset of edges 
incident to $v$, including the addition of new edges.

Let $G = (V, E)$ be a 
graph with positive real edge weights. Let
$n=|V|$ and $m=|E|$.  
To state our result we need the following definitions. 
For a vertex $x \in V$,
let $\mStarX$ denote the number
of edges that lie on shortest paths through $x$.
Let $\mVmax$ denote the average over all $\mStarX$, i.e.,
 $\mVmax = \frac{1}{n} \sum_{x \in V}{\mStarX}$. 
Finally, let $\mStar$ denote the total number of edges that lie on shortest paths
in $G$. For our incremental bound, we consider the maximum of each of these terms in
the two graphs before and after the update.
Here is our main result.

\begin{theorem}
\label{thm:main1-intro}
After an incremental update on a vertex $v$ 
in a directed or undirected
graph with positive edge weights, the
betweenness centrality of all vertices 
can be recomputed in $O(m' \cdot  n + n^2)$ time, where
$m'= \mVmax + m_v^{\ast}$.
\end{theorem}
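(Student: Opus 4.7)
The plan is to adapt Brandes' algorithm by reusing and locally repairing the single-source shortest path (SSSP) structures, rather than recomputing them from scratch. For each source $s \in V$, we maintain the SSSP DAG $D_s$ rooted at $s$, and at each vertex $t$ we store its distance $d_s(t)$ from $s$, its DAG-predecessor list, and the number $\sigma_{st}$ of shortest paths from $s$ to $t$. Given these structures, Brandes' dependency-accumulation step can recompute the contribution of $s$ to the BC of every vertex in $O(m^*_s + n)$ time, where $m^*_s$ counts DAG edges out of $s$. Hence the work reduces to updating each $D_s$ after the incremental vertex update at $v$.

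To update $D_s$, the key observation is that every shortest path from $s$ whose length or multiplicity changes must use at least one of the updated edges incident to $v$, and therefore must pass through $v$ in the new graph. I would first compute, for each vertex $t$ reachable from $v$ through potentially new shortest-path edges, a candidate distance $d'_s(t) = d_s(v) + d^{\mathrm{new}}_{v \to t}$ via a Dijkstra-like relaxation seeded at $v$ and confined to the subgraph of edges that can lie on shortest paths through $v$ in the new graph. For each vertex $t$ whose distance strictly decreases I rebuild its predecessor list and recompute $\sigma_{st}$ from its newly incident DAG edges; for each vertex whose distance is unchanged but which has acquired new shortest paths via $v$, I merge those new predecessors into its existing list and add the corresponding contribution to $\sigma_{st}$.

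The main obstacle is bounding the per-source update work by $O(\mStarV + n)$. The edges touched by this procedure all lie on a shortest path through $v$ in either the old or new graph, and there are at most $\mStarV$ such edges when that quantity is taken as the maximum over the two graphs as specified in the theorem. The subtle sub-case is the set of vertices whose distance from $s$ is unchanged but whose $\sigma_{st}$ must be increased through newly appearing alternative paths via $v$: naively this could seem to touch many predecessor lists, but I expect to argue that only those predecessors that themselves lie on a shortest path through $v$ contribute, which keeps the work within the $\mStarV$ budget. An additive $O(n)$ per source covers initialization and bookkeeping at vertices whose state is ultimately unchanged.

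Finally, I would sum over all $n$ source vertices. Updating all DAGs costs $O(n \cdot \mStarV + n^2)$, and rerunning Brandes' dependency accumulation on the updated DAGs costs $\sum_{s \in V} O(m^*_s + n) = O(n \cdot \mVmax + n^2)$. Combining these yields $O((\mStarV + \mVmax) \cdot n + n^2) = O(m' \cdot n + n^2)$, which is the stated bound. The general vertex update, in which several edges incident to $v$ are inserted or shortened simultaneously, is handled uniformly by the Dijkstra-like pass seeded at $v$, since that pass does not distinguish between the updated incident edges.
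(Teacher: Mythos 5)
Your high-level plan --- keep the per-source SSSP DAGs, repair them after the update, and rerun Brandes' accumulation in reverse topological order --- is the same as the paper's, and your accounting for the accumulation phase ($O(m^*_s+n)$ per source, $O(\mVmax n + n^2)$ in total) is correct. The gap is in the repair step itself. You propose a ``Dijkstra-like relaxation seeded at $v$'' for each source. If that relaxation uses a priority queue it costs $\Omega(n\log n)$ per source in the worst case, i.e.\ $\Omega(n^2\log n)$ overall, which already exceeds the claimed $O(m'\cdot n+n^2)$; and you cannot ``confine'' the relaxation to the edges that lie on shortest paths through $v$ in the \emph{new} graph, because identifying that edge set is precisely what the repair step must produce. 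The idea that makes the linear bound work, and which is missing from your argument, is that when only edges \emph{into} $v$ are updated the entire shortest-path structure \emph{out of} $v$ is unchanged: $\DAG'(v)=\DAG(v)$ and $d'(v,t)=d(v,t)$ for all $t$. Hence no relaxation is needed at all: after an $O(|\Ein(v)|)$ scan per source to obtain $d'(s,v)$ and the new path counts into $v$, each $d'(s,t)$ and $\sigma'_{st}$ follows from a constant-time case analysis on $d'(s,v)+d(v,t)$, and the new edges of $\DAG'(s)$ are obtained by \emph{copying} edges of the already-known $\DAG(v)$ selected by constant-time flags. This is what bounds the per-source repair by $O(m^*_s+m^*_v+n)$ with no logarithmic factor.

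Second, your claim that a single pass ``handles the general vertex update uniformly'' does not hold as stated. When both incoming and outgoing edges of $v$ are updated, $d(s,v)$ changes (via $\Ein(v)$) \emph{and} the suffix distances from $v$ change (via $\Eout(v)$), so your candidate $d'_s(t)=d_s(v)+d^{\mathrm{new}}_{v\to t}$ uses a stale $d_s(v)$, and --- more importantly --- there is no longer an unchanged DAG rooted at $v$ from which to copy the new DAG edges. The paper resolves this by sequencing the update: first apply $\Ein(v)$, exploiting that $\DAG(v)$ is unchanged, and then apply $\Eout(v)$ in the reverse graph $\Grev$, where those edges become incoming to $v$ and the reverse DAG rooted at $v$ is the unchanged structure; maintaining the reverse DAGs for all sources (via the sets $R_t$) is what lets the two phases compose within the same $O(m'\cdot n+n^2)$ budget. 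Your proposal would need either this two-phase decomposition or a separate argument for the mixed case to be complete.
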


Our method is to efficiently maintain the single source shortest paths
(SSSP) directed acyclic graph (DAG) rooted at each source $s\in V$,  and
therefore ensure that after every change we only examine the edges 
that lie on shortest path DAGs. 
Since $m_v^{\ast} \leq m^*$ for all $v\in V$, and
$\mStar \le m$,  the worst case time for our incremental
algorithm is bounded by $O(mn + n^2)$, which
is a $\log{n}$ factor improvement over Brandes' algorithm on
sparse graphs. Moreover, our bound in terms of $\mVmax$ and $m_v^*$ results
in much better bounds for many dense graphs.
For example, as noted in \cite{KKP93}, it
is known \cite{FG85,HZ85,LRP89} that $\mStar = O(n \log{n})$
with high probability in a complete graph where edge weights are chosen from a
large class of probability distributions, including the uniform distribution on the range $\{1, \ldots, n^2 \}$, and
our algorithm is much faster than~\cite{Brandes01} 
on these graphs.

Our algorithm is very simple, especially for the edge update case,
and we present an efficient
cache-oblivious version that incurs $O(scan(n^2) + n \cdot sort(m'))$ 
cache misses.
Here, for a cache of size $M$ that can hold $B$ blocks, $scan(r)= r/B$ and
$sort(r)= (r/B) \cdot  \log_M r$ with a tall cache ($M \geq B^2$). Both $scan$ and
$sort$ are measures of good caching performance (even though $sort(r)$ 
performs $r\log r$ operations, the base of $M$ in the $\log$ makes $sort(r)$
preferable to, say $r$ cache misses). In contrast, the Brandes algorithm
calls Dijkstra's algorithm, which is affected by unstructured accesses
to adjacency lists that lead to large caching costs (see, e.g., 
\cite{MehlhornM02}).


In our incremental algorithm, we  assume that the BC score 
for all vertices has been computed and the SSSP DAG rooted at every vertex
is available. This can be achieved
by running Brandes' algorithm.  
We present an alternate algorithm based
on the Hidden Paths algorithm in Karger~et~al.~\cite{KKP93}, which computes
APSP in
$O(m^* n + n^2 \log {n})$ time.
This leads to
a static BC algorithm with the same time bound as the Hidden Paths algorithm,
which is faster than Brandes' when $\mStar = o(m)$
and $m= \omega(n \log n)$. We also note that substituting Pettie's
$O(mn + n^2 \log\log n)$~\cite{Pettie04} all-pairs shortest paths (APSP) algorithm for directed graphs
or the $O(mn \cdot \log \alpha(m,n))$~\cite{PR05}
APSP algorithm for undirected graphs (where $\alpha$ is an 
inverse-Ackermann function) in place of Dijkstra's algorithm leads to
static BC algorithms with the same time complexity as the corresponding
APSP algorithm, and both improve on Brandes.
Our incremental algorithm has better bounds than any of these APSP algorithms
on sparse graphs, and is no worse (and likely better) on dense graphs.
\subsection{Related work}
The notion of betweenness centrality 
was formalized by 
Freeman~\cite{Freeman77}  who also defined
other measures such as closeness centrality. 
Approximation algorithms and parallel algorithms for 
BC have been considered in \cite{BaderKMM07,GeisbergerSS08}, 
\cite{MadduriEJBC09} respectively. 
Lee~et~al.~\cite{Lee12}
present  a framework called QUBE (Quick Update of BEtweenness centrality) 
which allows edges to be inserted
and deleted from the graph, and  
recently, Singh~et~al.~\cite{SinghGIS13} build on the 
work of Lee~et~al.~\cite{Lee12} to allow
nodes to be added and deleted. 
Another recent work by Kas~et~al.\cite{KasWCC13}
deals with the problem of incremental BC and their algorithm is based on the 
dynamic all pairs shortest paths algorithm by Ramalingam and Reps~\cite{RR96}.
All the above mentioned papers give encouraging experimental results, 
but there are no performance guarantees.

Closest to our work is the incremental algorithm by 
Green~et~al.~\cite{GreenMB12} 
for 
a single edge insertion in
unweighted graphs, which maintains  a breadth-first search (BFS)
tree rooted at every source $s$ and identifies vertices for which 
the distance (BFS level)
from $s$ has changed. However, unlike our algorithm, they do not maintain 
the BFS DAG and hence
need to run a BFS for all vertices 
whose distances have changed.
Thus, in the worst case, their algorithm takes
$\Theta (mn + n^2)$ time  for compute the 
BC of all vertices in an {\em unweighted} graph.
In contrast, 
our algorithm, which takes time $O(m' \cdot n + n^2)$ 
even in the weighted case and even for a vertex update,
improves on the algorithm in \cite{GreenMB12} for unweighted graphs
when $m' = o(m)$.

\subsubsection{ Organization:}
Section~\ref{sec:prelims} presents
the notation we use,
and discusses Brandes' algorithm. Section~\ref{sec:increment} presents our
basic 
 BC algorithm for an incremental {\it edge} update, and this algorithm is
 extended  to a vertex update in
Section 4. Finally, in Section~\ref{sec:cache-obv} and Section~\ref{sec:static}
we give brief sketches of our cache-efficient and static algorithms.

\section{Preliminaries and background}
\label{sec:prelims}
We first consider directed graphs.
Let $G = (V, E)$ denote
a directed graph with positive real edge weights, given by 
$\weight: E \rightarrow \mathcal{R}^{+}$.
Let $\pi_{st}$ denote a path from $s$ to
$t$ in $G$. Define $\weight(\pi_{st}) = \sum_{e \in \pi_{st}} \weight(e)$ as the weight
of the path $\pi_{st}$. 
We use $d(s,t)$ to denote the weight of a shortest path from $s$ 
to $t$ in $G$, also called its \emph{distance}.

The following notation was developed by Brandes~\cite{Brandes01}.
For a source $s$ and a vertex $v$, let $P_s(v)$ denote the predecessors of $v$ on shortest
paths from $s$, i.e.,
\begin{eqnarray}
\label{eqn:pred-defn}
P_s(v) = \{ u \in V: (u, v) \in E \mbox{  and  } d(s, v) = d(s, u) + \weight(u, v) \}
\end{eqnarray}
Further, let $\sigma_{st}$ denote the number of shortest paths from 
$s$ to $t$ in $G$ (with $\sigma_{ss}=1$).
Finally, let $\sigma_{st} (v)$ denote the number of shortest paths from $s$ to $t$ in $G$ that pass through $v$.
It follows from the definition that,
\begin{eqnarray}
\label{eq:pair-dep}
\sigma_{st}(v)  =
\begin{cases}
   0   \ \ \ \ \ \ \ \ \ \ \ \   \mbox { if $d(s,t) < d(s, v) + d(v, t)$} \\
                 \sigma_{sv} \cdot \sigma_{vt}  \ \ \ \ \mbox{otherwise}
\end{cases}
\end{eqnarray}
%
 The dependency of the pair $s, t$ on an intermediate vertex $v$ is 
defined in~\cite{Brandes01} as the {\em pair dependency}
$\delta_{st}(v) = \frac{\sigma_{st}(v)} {\sigma_{st}}$.

\vspace{.05in}
For $v \in V$, the {\it betweenness centrality} $\bc(v)$ is defined by Freeman~\cite{Freeman77} as:
\begin{eqnarray}
\label{eqn:bc-defn}
\bc(v) = \sum_{s \neq v, t \neq v} \frac{\sigma_{st}(v)}{\sigma_{st}} = \sum_{s \neq v, t \neq v} \delta_{st}(v)
\end{eqnarray}
The following two-step procedure computes BC for all $v \in V$:
\begin{enumerate}%
\item For every pair $s, t \in V$, compute $\sigma_{st}$.
\item For every vertex $v \in V$, and for every $s, t$ pair, compute $\sigma_{st}(v)$ (Equation~\ref{eq:pair-dep}), and then compute $\bc(v)$ (Equation~\ref{eqn:bc-defn}).
\end{enumerate}
Step~1 above can be achieved by $n$ executions of Dijkstra's single source shortest paths
algorithm. 
Therefore Step~1 takes time $O(mn + n^2 \log{n})$ time, 
if we use a 
priority queue 
with $O(1)$ amortized cost for the decrease-key operation.
For every vertex $v$, Step~2
takes $O(n^2)$ time, since there are $O(n^2)$ pair dependencies. 
This gives a $\Theta(n^3)$ time algorithm to compute BC for all vertices.
Thus, the bottleneck of the above algorithm is the second step which explicitly sums up the pair
dependencies for every vertex.
To obtain a faster algorithm for sparse graphs, Brandes~\cite{Brandes01} defined the dependency of a vertex $s$ on a vertex $v$ as:
$\delta_{s \bullet} (v) = \sum_{t \in V \setminus {\{v, s\}}} \delta_{st}(v)$.
Brandes~\cite{Brandes01} also made the useful observation that the partial sums satisfy a recursive
relation. In particular, the dependency of a source $s$ on a vertex $v \in V$ can be written as:

\begin{eqnarray}
\label{eqn:rec-depend}
\delta_{s \bullet} (v) = \sum_{w: v \in P_s(w)} \frac{\sigma_{sv}}{\sigma_{sw}} \cdot \left ( 1 + \delta_{s \bullet}(w) \right)
\end{eqnarray}
(See \cite{Brandes01} for a proof of Equation~\ref{eqn:rec-depend}.) The above equation gives
an efficient algorithm for computing BC described in the next section.

\subsection{Brandes' algorithm}

We present a high level overview of Brandes' algorithm (this high-level
algorithm is given in Appendix~\ref{sec:Brandes-steps}).
The algorithm begins by
initializing the BC score for every vertex to $0$. Next,
for every $s \in V$, it executes Dijkstra's SSSP algorithm. During this step,
for every $t \in V$, it computes $\sigma_{st}$, the number of shortest paths from $s$ to $t$, and
$P_s(t)$, the set of predecessors of $t$ on shortest paths from
$s$. Additionally, the algorithm stores the vertices $v \in V$ in a stack $S$ in order of non-increasing value
of $d(s, v)$. Finally, to compute the BC score, 
the algorithm accumulates the dependency of $s$. 
We now elaborate on this final step, which is given in 
Algorithm~\ref{algo:accumulate} (Accumulate-dependency). 
This algorithm 
takes
as its input a source $s$ for which Dijkstra's SSSP algorithm has 
been executed and the stack $S$ containing
vertices ordered by distance from $s$. The algorithm repeatedly extracts a vertex from $S$ and 
accumulates the dependency 
using Equation \ref{eqn:rec-depend}.
The time taken by Algorithm~\ref{algo:accumulate} is linear in the size of the $\DAG$
rooted at $s$, i.e., it is $O(m^{\ast}_{s})$.

Note that in Brandes' algorithm, the set $S$ contains vertices $v \in V$
ordered in non-increasing value of $d(s,v)$. However, for the dependency accumulation it suffices 
that 
$S$ contains vertices $v \in V$ ordered in the reverse topological order of the $\DAG(s)$.
Such an ordering ensures that the dependency of a vertex $w$ is {\em accumulated} to any of its predecessor $v$,
only after all the successors of $w$ in $\DAG(s)$ have been processed. 
This observation is useful for our
incremental algorithm, since topological sort can be performed in linear time. 
\begin{algorithm}
\begin{algorithmic}[1]
\REQUIRE
For every $t \in V$: $\sigma_{st}, P_s(t)$ \\
\hspace{0.25in}A stack $S$ containing $v \in V$ in a suitable order (non-increasing $d(s,v)$ in \cite{Brandes01}) 
\STATE {\bf for} every $v \in V$ {\bf do}  $\delta_{s\bullet}(v) \leftarrow 0$ 
\WHILE {$S \neq \emptyset$}
\STATE  $w \leftarrow$ pop$(S)$
\STATE {\bf for} $v \in P_s(w)$ {\bf do} $\delta_{s \bullet}(v) \leftarrow \delta_{s \bullet}(v) + \frac{\sigma_{sv}}{\sigma_{sw}} \cdot \left( 1 + \delta_{s \bullet}(w) \right)$ 
\STATE {\bf if} $w \neq s$ {\bf then} $\bc(w) \leftarrow \bc(w) + \delta_{s \bullet}(w)$ 
\ENDWHILE
\end{algorithmic}
\caption{Accumulate-dependency($s,S$) (from \cite{Brandes01})}
\label{algo:accumulate}
\end{algorithm}

\section{Incremental edge update}
\label{sec:increment}

In this section we present our algorithm to recompute BC
of all vertices in a graph $G = (V, E)$  after an incremental edge update (i.e.,  adding a new edge $(u, v)$   or decreasing the edge weight of an existing edge $(u,v)$). We extend this to a
vertex update in the next section. We first consider directed graphs.

Let $G' = (V, E')$ denote the graph obtained after an edge update to ${G=(V,E)}$. Let $d(s,t), \sigma_{st}$, and $\delta_{s \bullet}(t)$ denote the distance from $s$ to $t$ in $G$,
the number of shortest paths from $s$ to $t$ in $G$, and the dependency of $s$ on $t$ in $G$ respectively, and  let $d'(s,t), \sigma'_{st}$, and $\delta'_{s \bullet}(t)$ denote these parameters in the graph $G'$. 

Our incremental algorithm relies on maintaining the SSSP DAG rooted at every $s \in V$ after an edge update. 
Let $\DAG(s)$, $\DAG'(s)$ denote the SSSP DAG rooted at each  $s$ in $G$ and in $G'$ respectively.
We show how to efficiently maintain these DAGs after an update. The updated DAGs give us the
updated $P'_s(t)$ for every $s, t \in V$. We also show how to maintain for every 
$s, t \in V$, $d'(s,t)$ and $\sigma'_{st}$. Then, using Algorithm~\ref{algo:accumulate},
we obtain the updated BC scores for all vertices.

We begin by making some useful observations. 
\begin{lemma} 
\label{lem:sets-equal}
Let $(u, v)$ denote the edge on which the weight is decreased. 
Then, for any vertex $x \in V$, the set of shortest paths from $x$ to $u$
is the same in $G$ and $G'$, and we have
\[
d'(x, u) =  d(x, u)  \mbox {     and    } d'(v, x) = d(v, x) ; ~~~ \sigma'_{xu} = \sigma_{xu} \mbox{   and   } \sigma'_{vx} = \sigma_{vx}  
\]
\end{lemma}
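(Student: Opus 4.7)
The plan is to show that the modified edge $(u,v)$ cannot appear on any shortest walk that ends at $u$ or starts at $v$, after which the conclusion will follow from the fact that $G$ and $G'$ agree on every other edge.

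The key structural fact to establish is that no shortest walk from $x$ to $u$ in $G'$ can traverse edge $(u,v)$. Suppose some such shortest walk $P$ did use $(u,v)$; since $P$ ends at $u$ yet uses the edge $(u,v)$ (oriented from $u$ to $v$) somewhere along the way, one may decompose it as $P = P_1 \cdot (u,v) \cdot P_2$, where $P_1$ is a walk from $x$ to $u$ and $P_2$ is a non-empty walk from $v$ back to $u$. Because all edge weights are strictly positive, $\weight(u,v) + \weight(P_2) > 0$, so $P_1$ is a strictly shorter $x$-to-$u$ walk in $G'$, contradicting the minimality of $P$. A symmetric decomposition (splitting at the last occurrence of the edge) shows that no shortest walk in $G'$ from $v$ to any $x$ can use $(u,v)$ either. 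The same reasoning applies inside $G$, even in the weight-decrease case where $(u,v)$ is present in $G$ with a possibly larger positive weight.

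With these structural facts, the remainder is routine. Any walk that avoids edge $(u,v)$ has the same total weight in $G$ and $G'$, since the two graphs agree on all other edges. Hence a shortest walk from $x$ to $u$ in $G'$ (which must avoid $(u,v)$ by the argument above) is a walk in $G$ of the same weight, giving $d(x,u) \leq d'(x,u)$; conversely, a shortest walk from $x$ to $u$ in $G$ (which also avoids $(u,v)$) is a walk in $G'$ of the same weight, giving $d'(x,u) \leq d(x,u)$. Thus $d'(x,u) = d(x,u)$, and the two collections of shortest $x$-to-$u$ walks coincide as sets, which yields $\sigma'_{xu} = \sigma_{xu}$. The argument for $d'(v,x) = d(v,x)$ and $\sigma'_{vx} = \sigma_{vx}$ is identical after reversing the direction.

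I do not foresee a significant obstacle; the whole argument rests on positive edge weights forbidding a shortest walk from containing a positive-weight cycle, which is standard. The edge-insertion variant is handled automatically by treating the ``old'' edge as having weight $+\infty$, which only reinforces the cycle argument inside $G$.
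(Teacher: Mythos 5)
Your proposal is correct and is essentially the paper's own argument, just written out in full: the paper's proof is the single observation that, with positive edge weights, the edge $(u,v)$ cannot lie on a shortest path ending at $u$ or starting at $v$, and your positive-weight-cycle decomposition is exactly the justification of that observation. No issues.
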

\begin{proof} 
Since edge weights are positive, the edge $(u,v)$ cannot lie on a shortest path to $u$ or from $v$. The lemma follows.
\qed
\end{proof}
By Lemma \ref{lem:sets-equal}, $\DAG(v) = \DAG'(v)$ after weight of $(u,v)$ is decreased.
%
The next lemma shows that after the weight of $(u,v)$ is decreased we 
can efficiently obtain the updated values $d'(s,t)$ and $\sigma'_{st}$ for any $s,t \in V$.
\begin{lemma} 
\label{lem:update-st}
Let the weight of edge $(u,v)$ be decreased to $\weight'(u, v)$, and for a
any given pair of vertices $s,t$, let
$D(s,t) = d(s, u) + \weight'(u, v) + d(v, t)$. Then,
\begin{enumerate}
\item If $d(s, t)  < D(s,t)$, then 
$d'(s, t) = d(s,t)$ and $\sigma'_{st} = \sigma_{st}$.
\item If $d(s, t)  = D(s, t)$, then $d'(s, t) = d(s,t)$ and 
$\sigma'_{st} = \sigma_{st} + (\sigma_{su} \cdot \sigma_{vt})$.
\item If $d(s, t)  > D(s,t)$, then
$d'(s, t) = D(s, t)$ and $\sigma'_{st} = \sigma_{su} \cdot \sigma_{vt}$.  
\end{enumerate}
\end{lemma}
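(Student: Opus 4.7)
My plan is to analyze shortest $s$-$t$ paths in $G'$ by splitting them into those that use the edge $(u,v)$ and those that do not. If a path avoids $(u,v)$, it is a valid path in $G$ with the same weight, so among such paths the minimum weight is exactly $d(s,t)$. If a path uses $(u,v)$, then its $s$-to-$u$ prefix and its $v$-to-$t$ suffix are simple paths in $G'$ not using $(u,v)$ (positive edge weights prevent repetition of the edge), hence they are paths in $G$. By Lemma~\ref{lem:sets-equal}, the minimum weight of an $s$-$u$ path in $G$ is $d(s,u)$ and of a $v$-$t$ path is $d(v,t)$, so the minimum weight of any $(u,v)$-using $s$-$t$ path in $G'$ is exactly $D(s,t)$. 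Therefore $d'(s,t) = \min\{d(s,t), D(s,t)\}$, which immediately yields the distance claim in each of the three cases.

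For the path counts, I would handle the three cases separately. In case~1, since $d(s,t) < D(s,t)$, no shortest path in $G'$ uses $(u,v)$, so the shortest paths in $G'$ are in bijection with those in $G$, giving $\sigma'_{st} = \sigma_{st}$. In case~3, since $D(s,t) < d(s,t)$, every shortest path in $G'$ must use $(u,v)$; by Lemma~\ref{lem:sets-equal} each such path is obtained by concatenating a shortest $s$-$u$ path in $G$ (there are $\sigma_{su}$ of these), the edge $(u,v)$, and a shortest $v$-$t$ path in $G$ (there are $\sigma_{vt}$), and these decompositions are in bijection with shortest $s$-$t$ paths in $G'$, yielding $\sigma'_{st} = \sigma_{su}\cdot\sigma_{vt}$.

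Case~2, where $d(s,t) = D(s,t)$, is the one that requires the most care, since shortest paths in $G'$ come from both families and I must avoid double-counting. The key point is that the two families are disjoint: a shortest path in $G$ cannot use $(u,v)$, because if it did its weight would be $d(s,u) + \weight(u,v) + d(v,t)$, which strictly exceeds $d(s,u) + \weight'(u,v) + d(v,t) = D(s,t) = d(s,t)$, a contradiction. Hence the $\sigma_{st}$ shortest paths of $G$ (none of which use $(u,v)$) remain shortest in $G'$, and the $\sigma_{su}\cdot\sigma_{vt}$ paths through $(u,v)$ are new and distinct. Summing gives $\sigma'_{st} = \sigma_{st} + \sigma_{su}\cdot\sigma_{vt}$.

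The main obstacle is really this disjointness check in case~2; the other two cases follow quickly from the $\min$ characterization of $d'(s,t)$ and the decomposition of $(u,v)$-using paths via Lemma~\ref{lem:sets-equal}. The argument uses the strictness $\weight'(u,v) < \weight(u,v)$ (or equivalently, the fact that $(u,v)$ is a genuinely new or strictly cheaper edge) in exactly this one place.
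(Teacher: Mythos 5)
Your proof is correct and follows essentially the same route as the paper's (which is only a three-sentence sketch asserting each case): decompose shortest $s$-$t$ paths in $G'$ by whether they use $(u,v)$, derive $d'(s,t)=\min\{d(s,t),D(s,t)\}$ via Lemma~\ref{lem:sets-equal}, and count each family. Your explicit disjointness check in case~2 (that no shortest path of $G$ already uses $(u,v)$, which is where the strictness of the weight decrease enters) is a detail the paper glosses over, and it is handled correctly.
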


\begin{proof}
Case 1 holds because the shortest path distance
 from $s$ to $t$ remains unchanged and no
new shortest path is created in this case.
In case 2, 
the shortest path distance from $s$ to $t$ remains unchanged, but there 
are $\sigma_{su} \cdot \sigma_{vt}$ new shortest paths from $s$ to $t$
created via edge $(u, v)$. 
In case 3, 
the shortest path distance from $s$ to $t$ decreases and all new
shortest paths pass through $(u,v)$.
\qed
\end{proof}

By Lemma \ref{lem:update-st}, the updated values $d'(s,t)$ and 
$\sigma'_{st}$  can be computed in constant time for each pair $s,t$.
Once we have the updated $d'(\cdot)$ and $\sigma'_{(\cdot)}$ values, we 
need the updated predecessors $P'_{s}(t)$ for every $s, t$ pair for
the betweenness centrality algorithm. In order to obtain these updated
predecessors efficiently,
we maintain the SSSP DAG rooted at every source $s\in V$.
The next section gives a very simple algorithm to maintain an 
SSSP DAG after an incremental edge update.

\subsection{Updating an SSSP DAG}
Let $\DAG(s)$, $\DAG'(s)$  denote the single source shortest path DAG 
rooted at $s$ in $G$ and $G'$ respectively.
Recall that since the update is a decrease in the edge weight of $(u, v)$, 
we know that $\DAG'(v) = \DAG(v)$.
For an $s, t$ pair we define a $flag(s,t)$ to capture the change in distance/number of shortest
paths from $s$ to $t$ after the decrease in the weight of the edge $(u,v)$
as follows.
\begin{equation}
flag(s,t) = 
\begin{cases}
\mbox{NUM-changed} \textrm{   \hspace{0.2in} if $d'(s,t) = d(s,t)$ and $\sigma'_{st} > \sigma_{st}$ } \\
\mbox{WT-changed} \textrm{   \hspace{0.3in} if $d'(s,t) < d(s,t)$} \\
\mbox{UN-changed} \textrm{   \hspace{0.32in} if $d'(s,t) = d(s,t)$ and $\sigma'_{st} = \sigma_{st}$ } \\
\end{cases}
\end{equation}

By Lemma \ref{lem:update-st}, $flag(s,t)$ can be computed in constant time
for each pair $s,t$.
On input $s$ and the updated edge $(u,v)$, 
Algorithm~\ref{algo:update-dag} (Update-DAG) constructs a set of edges 
$H$ using these $flag$ values,
together with $\DAG(s)$ and $\DAG(v)$.
We will show that $H$ contains exactly the edges in $\DAG'(s)$.
  We begin by initializing $H$ to empty (Step~\ref{dag-init}). We then consider each edge $(a, b)$ in $\DAG(s)$ (in Steps~2--4) and in $\DAG(v)$ (in Steps~5--7) and
depending on the value of  $flag(s,b)$ decide whether to include it in the set $H$. Finally, we check if the updated edge $(u,v)$ will be inserted in $H$ (in Steps~8--9).
It is clear from the algorithm that the time taken is linear in the size of $\DAG(s)$ and $\DAG(v)$
which is bounded by $O(m^{\ast}_s + m^{\ast}_v)$.

\begin{algorithm}
\begin{algorithmic}[1]
\REQUIRE $\textrm{DAG}(s)$, $\textrm{DAG}(v)$, and $flag(s,t), \forall t\in V$
\ENSURE An edge set $H$ after weight of edge $(u,v)$ has been decreased 
\STATE $H \leftarrow \emptyset$ \label{dag-init}
\FOR {each edge $(a,b) \in \DAG(s)$ and $(a,b) \neq (u,v)$} 
	\IF {$flag(s,b) = \mbox{UN-changed}$ or $flag(s,b) = \mbox{NUM-changed}$}
		\STATE $H \leftarrow H \cup \{(a,b)\}$
	\ENDIF
\ENDFOR
\FOR {each edge $(a,b) \in \textrm{DAG}(v)$}
	\IF {$flag(s,b) = \mbox{NUM-changed}$ or $flag(s,b) = \mbox{WT-changed}$}
		\STATE $H \leftarrow H \cup \{(a,b)\}$
	\ENDIF
\ENDFOR
\IF {$flag(s,v) = \mbox{NUM-changed}$ or $flag(s,v) = \mbox{WT-changed}$}
\STATE $H \leftarrow H \cup \{(u,v)\}$
\ENDIF
\end{algorithmic}
\caption{Update-DAG($s,(u,v)$)}
\label{algo:update-dag}
\end{algorithm}

\begin{lemma} \label{lem:dag-correct}
Let $H$ be the set of edges output by Algorithm~\ref{algo:update-dag}. 
An edge $(a,b) \in H$ if and only if $(a,b) \in \DAG'(s)$.
\end{lemma}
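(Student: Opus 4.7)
The plan is to prove both inclusions using the characterization that $(a,b) \in \DAG'(s)$ iff $(a,b) \in E'$ and $d'(s,a) + w'(a,b) = d'(s,b)$, organizing the case analysis around the value of $flag(s,b)$ since this is exactly the signal the algorithm keys off of.

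For the forward inclusion $H \subseteq \DAG'(s)$, I would handle the three insertion branches of the algorithm in turn. (i) For an edge $(a,b) \in \DAG(s) \setminus \{(u,v)\}$ with $flag(s,b) \in$ \{UN-changed, NUM-changed\}, we have $d(s,a) + w(a,b) = d(s,b) = d'(s,b)$; since $d'(s,a) \leq d(s,a)$, a strict inequality would give $d'(s,b) < d(s,b)$, so $d'(s,a) = d(s,a)$ and the edge lies in $\DAG'(s)$. (ii) For an edge $(a,b) \in \DAG(v)$ with $flag(s,b) \in$ \{NUM-changed, WT-changed\}, Lemma~\ref{lem:update-st} (cases 2--3) gives $d'(s,b) = d(s,u) + w'(u,v) + d(v,b)$, and combining $d(v,a) + w(a,b) = d(v,b)$ with Lemma~\ref{lem:sets-equal} yields $d'(s,a) + w(a,b) = d'(s,b)$. (iii) The edge $(u,v)$, when $flag(s,v) \in$ \{NUM-changed, WT-changed\}, is handled directly from Lemma~\ref{lem:update-st} using $d'(s,u) = d(s,u)$.

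For the reverse inclusion $\DAG'(s) \subseteq H$, I would take $(a,b) \in \DAG'(s)$ and again split on $flag(s,b)$. When $flag(s,b)$ is WT-changed or NUM-changed and $(a,b) \neq (u,v)$, the key dichotomy is whether the shortest $s$-to-$a$ path in $G'$ uses edge $(u,v)$: if not, $d'(s,a) = d(s,a)$, so $(a,b) \in \DAG(s)$ and Steps~2--4 include it; if so, Lemma~\ref{lem:sets-equal} makes the $v$-to-$a$ portion a shortest path in $G$, and a short arithmetic manipulation using $d'(s,b) = d(s,u) + w'(u,v) + d(v,b)$ shows $(a,b) \in \DAG(v)$, so Steps~5--7 include it. The edge $(u,v)$ itself is captured by Steps~8--9. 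For the boundary case $b = v$, the root of $\DAG(v)$ has no incoming edges there, and positivity of weights combined with Lemma~\ref{lem:update-st} rules out any predecessor of $v$ other than $u$ in the WT-changed case.

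The main obstacle I anticipate is the UN-changed branch of the reverse direction: one must argue that every $(a,b) \in \DAG'(s)$ with $flag(s,b) =$ UN-changed already lies in $\DAG(s)$. The delicate point is that $d'(s,b) = d(s,b)$ alone does not preclude $a$ being reached via a shortened subpath through $(u,v)$; here the counting condition $\sigma'_{sb} = \sigma_{sb}$ becomes essential. Since decreasing an edge weight can only add (never remove) shortest paths, preservation of the count rules out any new shortest $s$-to-$b$ path via $(u,v)$, which in turn forces every shortest $s$-to-$a$ subpath extending along $(a,b)$ to also avoid $(u,v)$. This gives $d'(s,a) = d(s,a)$ and hence $(a,b) \in \DAG(s)$, completing the proof.
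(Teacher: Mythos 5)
Your proposal is correct, and it is in fact more complete than the proof printed in the paper. The paper argues only one direction: it classifies each shortest path $\pi'_{sb}$ in $G'$ as either (i) already shortest in $G$, whose edges lie in $\DAG(s)$ and are added in Steps~2--4, or (ii) new, hence of the form $s \leadsto u \rightarrow v \leadsto b$, whose prefix edges lie in $\DAG(s)$ and suffix edges in $\DAG(v)$ (via Lemma~\ref{lem:sets-equal}); this establishes $\DAG'(s) \subseteq H$ but leaves the converse --- that the $flag$ tests never admit a stale or spurious edge --- implicit. You prove both inclusions, organizing the argument around the algorithm's branches and the value of $flag(s,b)$, and discharging each case by the distance identity $d'(s,a) + \weight'(a,b) = d'(s,b)$ together with Lemmas~\ref{lem:sets-equal} and~\ref{lem:update-st}. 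Your treatment of the UN-changed branch (using $\sigma'_{sb} = \sigma_{sb}$ plus the fact that an incremental update only adds shortest paths to conclude no shortest $s$--$b$ path routes through $(u,v)$) is exactly the point the paper's short proof glosses over, and it is the right argument. Two small points to tighten: in the reverse inclusion you speak of \emph{the} shortest $s$-to-$a$ path in $G'$, but there may be several, so the dichotomy should be phrased as whether \emph{some} shortest $s$-to-$a$ path in $G'$ avoids $(u,v)$ (equivalently, whether the $s \leadsto a$ prefix of the witnessing shortest $s$--$b$ path does); and in the WT-changed subcase the ``avoids $(u,v)$'' branch is vacuous (it would force $d'(s,b) = d(s,b)$), which is why routing it to Steps~2--4 --- which do not fire for WT-changed --- causes no harm. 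Neither issue is a gap; both directions of your argument go through.
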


\begin{proof}
Since the update is an incremental update on edge $(u,v)$, we note that
for any $b$, a shortest path $\pi'_{sb}$ from $s$ to $b$ in $G'$ can be of two types: \\
(i) $\pi'_{sb}$ is a shortest path in $G$.
Therefore every edge on such a path is present in $\DAG(s)$ and 
each such edge is added to $H$ in Steps~2--4 of Algorithm~\ref{algo:update-dag}.\\
(ii) $\pi'_{sb}$ is not a shortest path in $G$. However, since 
$\pi'_{sb}$ is a shortest path in $G'$,
therefore $\pi'_{sb}$ is of the form $s \leadsto u \rightarrow v \leadsto b$.
Since shortest paths from $s$ to $u$ in $G$ and $G'$ are unchanged
(by Lemma \ref{lem:sets-equal}), the edges in the sub-path 
$s \leadsto u$ are
present in $\DAG(s)$ and they are added to $H$ in 
Steps~2--4 of Algorithm~\ref{algo:update-dag}. Finally, since shortest paths from $v$ to any $b$ 
in $G$ and $G'$ remain unchanged, the edges in the sub-path $v \leadsto b$ are present in $\DAG(v)$ and therefore
are added to $H$ in Steps~5--7 of Algorithm~\ref{algo:update-dag}.
\qed
\end{proof}
\REM{
\begin{proof}
Suppose $(a,b) \in \DAG'(s)$. 
This implies that
there is at least one shortest from $s$ to $b$ in $G'$ that uses 
the edge $(a,b)$. If $(a,b)$ is not the updated edge $(u,v)$, to 
show that $(a,b) \in H$ we consider the possible values for $flag(s,b)$:
\begin{enumerate}
\item If $flag(s,b) = \mbox{UN-changed}$ then the set of shortest paths from $s$ to $b$ in $G$ and $G'$ are
the same. Further, as $(a, b) \in \DAG'(s)$, it implies that
${(a,b) \in \DAG(s)}$, hence $(a,b)$ is added to $H$ by  Step $4$ of the Algorithm~\ref{algo:update-dag}.
\item If $flag(s,b) = \mbox{WT-changed}$ then every shortest path from $s$ to $b$ in $G'$ uses the updated edge $(u,v)$. This implies that
any shortest path $\pi'_{sb} \in G'$ is of the form $\pi'_{sb} = s \leadsto u \rightarrow v \leadsto b$. Moreover, since $(a, b) \in \DAG'(s)$,
there exists at least one shortest path from $s$ to $b$ in $G'$ of the form ${s \leadsto u \rightarrow v \leadsto a \rightarrow b}$. Thus, the edge $(a,b) \in \DAG'(v)$.
However, we know that $\DAG'(v) = \DAG(v)$.  
Hence the edge $(a,b)$ is added to $H$ by Step~7 of the Algorithm~\ref{algo:update-dag}.
\item If $flag(s,b) = \mbox{NUM-changed}$ then there is at least one shortest path
 from $s$ to $b$ in $G'$ that 
does not use the edge $(u,v)$ and at least one shortest path 
from $s$ to $b$ in $G'$ that uses the edge $(u,v)$.
Since, $(a, b) \in \DAG'(s)$, the edge $(a, b)$ lies on 
one or both types of paths.
Suppose $(a,b) \in \pi_{sb}$, which does not use $(u,v)$.
Then the path $\pi_{sb}$ is a shortest
path in $G$ and hence $(a, b) \in \DAG(s)$. In this case, $(a,b)$ is added to $H$ by Step~4 of Algorithm~\ref{algo:update-dag}.
If $(a, b) \in \pi'_{sb}$ which contains $(u,v)$, then since $\pi'_{sb}$
uses the edge $(u, v)$, 
we know that $\pi'_{sb} = s \leadsto u \rightarrow v \leadsto b$.
Hence, similar to the case (2) above, we conclude that $(a, b) \in \DAG'(v)$.
This implies that $(a, b) \in \DAG(v)$. Therefore the edge $(a, b)$ is added to $H$ by
Step~7 of Algorithm~\ref{algo:update-dag}.
\end{enumerate}
Finally, if  $(a,b)$ is the updated edge $(u,v)$ and $(a,b)\in \DAG'(s)$, then either $\sigma'(s,b) > \sigma(s,b)$ or  
$d'(s,b) < d(s,b)$,  
hence $(a,b)$ will be inserted in $H$ by Step~9.\\ \\
Suppose  edge $(a,b) \in H$. To show that $(a, b) \in \DAG'(s)$ 
we consider the different steps in Algorithm~\ref{algo:update-dag} where $(a,b)$ can be added
to $H$: 
\begin{enumerate}
\item The edge $(a,b)$ is added to $H$ by Step~4. This implies that the edge ${(a,b) \in \DAG(s)}$.
Thus, there exists a shortest path  in $G$ from $s$ to $b$, say $\pi_{sb} = s \leadsto a \rightarrow b$.
Note that we execute Step~4 when  $flag(s,b) = \mbox{UN-changed}$ or $flag(s, b) = \mbox{NUM-changed}$.
For either value of the flag every shortest path from $s$ to $b$ in $G$ is 
also shortest path in $G'$. 
Therefore, the path $\pi_{sb}$ is a shortest path in $G'$ and hence the edge $(a,b) \in \DAG'(s)$.
\item The edge $(a,b)$ is added to $H$ by Step~7. This implies that the edge ${(a,b) \in \DAG(v)}$.
Thus, there exists a shortest path in $G$ from $v$ to $b$, say $\pi_{vb} = v \leadsto a \rightarrow b$ that contains $(a,b)$.
Since decreasing the weight of the edge $(u,v)$ does not change 
shortest paths from $v$ to any other vertex, 
$\pi_{vb}$ continues to be a shortest path from $v$ to $b$ in $G'$.
We execute Step~7 when $flag(s, b) = \mbox{NUM-changed}$ or 
$flag(s,b) = \mbox{WT-changed}$.
Therefore, there exists at least one shortest path from $s$ to $b$ in $G'$ that uses the updated edge $(u,v)$.
Hence the path 
$\pi'_{sb} =  \pi'_{su} \cdot (u,v) \cdot \pi_{vb} $, where $\pi'_{su}$
is a shortest path in $G'$, must be a shortest path in $G'$, and this
establishes that  $(a,b) \in \DAG'(s)$.
\item The edge $(a,b)$ is added to $H$ by Step~9. This implies that $(a,b)$ is the updated edge in $G$ and there is at least a new shortest path from $s$ to $b$ going through it. Hence $(a,b) \in \DAG'(s)$.    
\end{enumerate}
This completes the proof of the lemma.
\qed
\end{proof}
}
\subsection{Updating betweenness centrality scores}
In this section we present 
Algorithm~\ref{algo:update-bc} (Incremental-BC), which
updates the BC scores
for all vertices after an incremental edge update. 
The algorithm takes as input the graph $G = (V,E)$ and
the updated edge $(u,v)$, together with the current
values of  $d(s,t)$ and $\sigma_{st}$ for all $s,t \in V$,
and the current shortest path dag $\DAG(s)$, for every $s \in V$.
We begin by initializing the BC score to $0$ for every vertex (Step~\ref{update-bc-init}). 
For every pair $s,t$, we compute the updated $d'(s,t)$ and $\sigma'_{st}$ using Lemma~\ref{lem:update-st} (Step~\ref{update-bc-dist}).
Next, in Step~\ref{update-bc-dag} we compute  $\DAG'(s)$  
for every source $s \in V$, using Algorithm~\ref{algo:update-dag}.
Note that
this gives us the updated predecessor values $P'_s(t)$, for 
every $s, t \in V$. We perform a
topological sort of $\DAG'(s)$ to obtain an ordering of $v \in V$ and 
store the ordered vertices in a stack $S$.
Finally, using $S$, the $\sigma'_{st}$, and  the $P'_s(t)$, we 
run Brandes' accumulation of dependencies (Algorithm~\ref{algo:accumulate})
to compute the updated BC scores.

\begin{algorithm}
\begin{algorithmic}[1]
\REQUIRE updated edge $(u, v)$ with new weight $\weight'(u,v)$, \\
\hspace{0.2in} $d(s,t)$ and $\sigma_{st}$, $\forall \ s, t \in V$; $~~~ \DAG(s), \forall \ s \in V$ 
\ENSURE $\bc'(v)$, $\forall v\in V$ \\
\hspace{0.3in} $d'(s,t)$ and $\sigma'_{st}$ $\forall \ s, t \in V $; $~~~ \DAG'(s), \forall \ s \in V$ 
\STATE {\bf for} {every $v \in V$} {\bf do}	$\bc'(v) \leftarrow 0$   \label{update-bc-init}	
\STATE {\bf for} {every $s, t \in V$} {\bf do}  compute $d'(s,t), \sigma'_{st}$, $flag(s,t)$ \hspace{0.2in} // use Lemma \ref{lem:update-st}  \label{update-bc-dist}
\FOR {every $s \in V$}
	\STATE Update-DAG$(s,(u,v))$ \hspace{1.7in} // use Algorithm~\ref{algo:update-dag} \label{update-bc-dag}
	\STATE stack $S \leftarrow$ vertices in $V$ in a reverse topological order in $\DAG'(s)$ \label{update-bc-topo} 
	\STATE Accumulate-dependency$(s, S)$ \hspace{1.25in} // use Algorithm~\ref{algo:accumulate} \label{update-bc-accum}
\ENDFOR
\end{algorithmic}
\caption{Incremental-BC($G=(V,E)$)}
\label{algo:update-bc}
\end{algorithm}

The correctness of our algorithm follows from the correctness of maintaining $d(\cdot), \sigma_{(\cdot)}$, and the updated DAGs. We now bound
the time complexity of our algorithm. In 
Step~\ref{update-bc-dist} of Algorithm~\ref{algo:update-bc} we spend  
constant time to compute the updated $d'(s,t)$ and $\sigma'_{st}$ values
for an $s,t$ pair, hence $O(n^2)$ time for all pairs.
In Step~\ref{update-bc-dag}, we obtain $\DAG'(s)$ for each $s\in V$ using Algorithm~\ref{algo:update-dag}, which takes time $O(m^{\ast}_s + m^{\ast}_v)$ for source $s$.
 This amounts to a total of 
$O(m^{\ast}_v n + \sum_{x \in V} m^{\ast}_x) = O( (\mVmax + m^{\ast}_v) n)$.
Finally, Steps~\ref{update-bc-topo} and \ref{update-bc-accum} take time 
linear in the size of the updated DAG. Thus the time complexity of our 
Incremental-BC  algorithm for an edge update is bounded by 
$O( (\mVmax + m^{\ast}_v) n + n^2)$. This establishes Theorem
\ref{thm:main1-intro} for directed graphs for an update on a single edge.

We remark that both Update-DAG and Increment-BC use very 
simple data structures: arrays, lists, and stacks.
Thus our algorithms are very simple to implement, and should have small
constant factors in their running time.

\paragraph{\bf Undirected graphs:}
Consider an undirected positive edge weighted graph $G = (V, E)$. 
To obtain an incremental algorithm
for $G$, we first construct the corresponding directed graph 
$G_D = (V, E_D)$
from $G$ by replacing every edge $\{a, b\} \in E$ by two 
directed edges $(a, b)$ and $(b,a)$, both with 
the weight of the undirected edge $\{a,b\}$. 
Since edge weights are positive, no shortest path can contain a cycle,
hence the SSSP sub-graph rooted at $s$ in $G$ is the same as the SSSP DAG
rooted at $s$ in $G_D$. 
Our incremental algorithm for $G$ for an edge update on $\{u, v\}$ 
is now simple: we apply two incremental edge
updates on $G_D$, one on $(u, v)$ and another on $(v, u)$. 
Since the number of edges in $G_D$ is only twice
the number of edges in $G$, all other parameters also increase by only 
a constant factor. This establishes Theorem \ref{thm:main1-intro} for edge updates in 
undirected graphs.

\section{Incremental vertex update}
\label{sec:vert-update}
Here we show how to extend the incremental edge update algorithm to handle 
an incremental update to a vertex $v$ in $G = (V, E)$, where we allow
an incremental edge update on any subset of edges incoming to and outgoing from the vertex $v$.
Our algorithm 
is a natural extension of the algorithm for single edge update. 
As in the algorithm for edge update, for every $s, t \in V$, we maintain
$d(s, t), \sigma_{st}$, and for every $s \in V$, $\DAG(s)$, the SSSP DAG rooted at $s$ in $G$. 
Once we have the updated DAGs, we use a topological sort and accumulate dependencies in the reverse topological
order to get updated BC scores for all vertices.
However, instead of working only with the graph $G$, 
here we also  work with 
the graph $G_R = (V, E_R)$, which is obtained by reversing every edge in $G$. That is $(a, b) \in E_R \textrm { iff } (b, a) \in E$.
Thus, for every $s \in V$, we also maintain $\DAG_R(s)$, the SSSP DAG rooted at $s$ in $G_R$. 

Broadly, our vertex update on $v$ is processed as follows.
Let $\Ein(v)$ and $\Eout(v)$ denote the set of  updated edges incoming to $v$ and outgoing from $v$ respectively. 
We process 
$\Ein(v)$ and $\Eout(v)$ in two steps.
Let $G'$ denote the graph
obtained by applying to $G$ the
updates on edges in $\Ein(v)$. Let $G''$ denote the
graph obtained by applying to $G'$ updates on edges in $\Eout(v)$.
For $s, t \in V$, let $d(s,t), d'(s,t)$ and $d''(s,t)$ denote the distance from $s$ to $t$ in $G, G'$, and $G''$ respectively.
We use a similar notation for other terms, such as $\sigma_{st}$ and $\DAG(s)$.
The main observation that we use is the following: 
Since $\Ein(v)$ contains updated edges incoming to $v$, the SSSP DAGs rooted at $v$ in $G$ and
in $G'$ are the same, that is, $\DAG(v) = \DAG'(v)$. We show that
 Algorithm~\ref{algo:update-dag} is readily adapted to obtain $\DAG'(s)$, for all $s\in V$.
At this point, our goal is to apply the updates in $\Eout(v)$ to $G'$ and obtain 
$\DAG''(s)$ for every $s$. To achieve this efficiently, we first 
obtain $\DAG'_R(s)$ for every $s$.
The reason to work with the reverse graph $\Grev'$ and DAGs in the reverse graph is that 
the edges in $\Eout(v)$ are in fact incoming edges to $v$ in $G'_R$. Hence our
method to maintain DAGs when incoming edges are updated works as it is on $G'_R$
and we obtain $\DAG''_R(s)$, for every $s$. Finally, using $\DAG''_R(s)$ for every $s$
we efficiently build $\DAG''(s)$.

We now give details of each step of our algorithm starting with the graph $G$ till
we obtain the $\DAG'_R(s)$ for every $s$.  
\begin{enumerate}
\item[(A)] {\bf Compute $d'(s, v)$ and $\sigma'_{sv}$ for any $s$:}
In this step we show how to compute in $G'$ the distance and number of shortest paths to $v$ from any $s$. We make
the following definitions:
For $(u_j, v) \in \Ein(v)$, let
$D_j(s, v) = d(s, u_j) + \weight'(u_j, v)$.
Since the updates on edges in $\Ein(v)$ are incremental, it follows that
\begin{eqnarray}
d'(s, v) = \min \{ d(s, v), \min_{j: (u_j, v) \in \Ein(v)} \{D_j(s, v) \} \}
\end{eqnarray}
Further, if $d'(s, v) = d(s, v)$, we define
\begin{eqnarray}
\newpaths  = | \{\pi'_{sv} : \mbox {$\pi'_{sv}$ is a shortest path in $G'$ and $\pi'_{sv}$ uses $e \in \Ein(v)$} \}|
\end{eqnarray}
We also need to compute $\sigma'_{sv}$, the number of shortest paths from $s$ to $v$ in $G'$.
It is straightforward to compute $d'(s,v)$, $\sigma'_{sv}$, and $\newpaths$ in $O(|\Ein(v)|)$ time. 
Algorithm~\ref{algo:compute-dist-vert} gives the details of this step.
\begin{figure}
\begin{minipage}{0.5\linewidth}
\begin{algorithm} [H]
\begin{algorithmic}[1]
\REQUIRE $\Ein(v)$ with updated weights $\weight'$ \\
\hspace{0.2in} $d(s,t)$ and $\sigma_{st}$, $\forall \ s, t \in V$
\ENSURE $d'(s,v), \sigma'_{sv}, \newpaths$
\STATE $\hat{\sigma}'_{sv} \leftarrow 0$, $\sigma'_{sv} \leftarrow \sigma_{sv}$, $currdist \leftarrow d(s,v)$ 
\FOR {each edge $(u_i,v) \in \Ein(v)$}
	\IF {$currdist = d(s,u_i) + \weight'(u_i,v)$}
	\STATE $\sigma'_{sv} \leftarrow \sigma'_{sv} + \sigma_{su_i}$
	\STATE $\hat{\sigma}'_{sv} \leftarrow \hat{\sigma}'_{sv} + \sigma_{su_i}$ \label{alg4:updnewpath}
	\ELSIF {$ currdist > d(s,u_i) + \weight'(u_i,v)$}
	\STATE $currdist \leftarrow d(s,u_i) + \weight'(u_i,v)$
	\STATE $\sigma'_{sv} \leftarrow \sigma_{su_i}$		
	\ENDIF	
\ENDFOR
\STATE $d'(s,v) \leftarrow currdist$
\end{algorithmic}
\caption{Compute-Dist-to-v $(s, \Ein(v))$}
\label{algo:compute-dist-vert}
\end{algorithm}
\end{minipage}
\begin{minipage}{0.5\linewidth}
\vspace{-0.18in}
\begin{algorithm}[H]
\begin{algorithmic}[1]
\REQUIRE $\DAG_R(s)$, and $R_t, flag(s,t), \forall t\in V$
\ENSURE An edge set $X$ after update on edges in $\Ein(v)$
\STATE $X \leftarrow \emptyset$
\FOR {each edge $(a,b) \in \DAG_R(s)$} \label{rdag:for1start}
	\IF {$flag(b,s) = \mbox{UN-changed}$ or $flag(b,s) = \mbox{NUM-changed}$}
	\STATE $X \leftarrow X \cup (a,b)$ \label{rdag:step1}
	\ENDIF
\ENDFOR
\FOR {each $b \in V \setminus \{s\}$} \label{rdag:for2start}
	\IF {$flag(b,s) = \mbox{NUM-changed}$ or $flag(b,s) = \mbox{WT-changed}$}
	\STATE $X \leftarrow X \cup R_b$ \label{rdag:step2}
	\ENDIF
\ENDFOR 
\end{algorithmic}
\caption{Update-Reverse-DAG($s$, $\Ein(v)$)}
\label{algo:update-rdag}
\end{algorithm}

\end{minipage}
\end{figure}

\item[(B)] {\bf Compute $d'(s,t)$ and $\sigma'(s, t)$ for all $s, t$:} Assuming that we have computed $d'(s, v), \sigma'_{sv}$ 
and $\hat{\sigma}'_{sv}$, we show that the values $d'(s, t)$ and $\sigma'(s, t)$ can be computed efficiently. We state
Lemma~\ref{lem:update-st-vert} which captures this computation. This lemma is similar to Lemma~\ref{lem:update-st} in the edge update case.
\begin{lemma} 
\label{lem:update-st-vert}
Let $\Ein(v)$ denote the set of edges incoming to $v$ which have been updated. 
Let $G'$ denote the graph obtained by applying update in $\Ein(v)$ to $G$.
For any $s \in V$ and $t \in V \setminus \{v\}$, let $D(s,t) = d'(s,v) + d(v,t)$.
\begin{enumerate}
\item If $d(s, t)  < D(s,t)$, then 
$d'(s, t) = d(s,t)$ and $\sigma'_{st} = \sigma_{st}$.
\item If $d(s, t)  = D(s, t)$ and $d(s,v) = d'(s,v)$, then $d'(s, t) = d(s,t)$ and 
${\sigma'_{st} = \sigma_{st} + \newpaths \cdot \sigma_{vt}}$.
\item If $d(s, t)  = D(s, t)$ and $d(s,v) > d'(s,v)$, then $d'(s, t) = d(s,t)$ and
${\sigma'_{st} = \sigma_{st} + \sigma'_{sv} \cdot \sigma_{vt}}$.
\item If $d(s, t)  > D(s,t)$, then
$d'(s, t) = D(s, t)$ and $\sigma'_{st} = \sigma'_{sv} \cdot \sigma_{vt}$.  
\end{enumerate}
\end{lemma}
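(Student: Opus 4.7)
My plan is to partition the shortest $s$-to-$t$ paths in $G'$ into those that use an edge of $E_i(v)$ and those that do not. The first observation is that updating only incoming edges to $v$ leaves shortest paths starting at $v$ untouched: by the same positive-weight argument as Lemma~\ref{lem:sets-equal}, no edge of $E_i(v)$ can lie on a shortest path out of $v$, so $d'(v,t) = d(v,t)$ and $\sigma'_{vt} = \sigma_{vt}$ for every $t \ne v$. Consequently, any shortest $s$-to-$t$ path in $G'$ that uses some edge of $E_i(v)$ passes through $v$, has an $s$-to-$v$ prefix of weight $d'(s,v)$, and a $v$-to-$t$ suffix of weight $d(v,t)$, so its total weight is exactly $D(s,t) = d'(s,v) + d(v,t)$. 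Any shortest $s$-to-$t$ path in $G'$ that avoids $E_i(v)$ is a path of $G$ of unchanged weight, hence costs at least $d(s,t)$, with equality iff it is a $G$-shortest $s$-to-$t$ path; conversely, every $G$-shortest such path survives in $G'$. Combining these facts yields $d'(s,t) = \min\{d(s,t), D(s,t)\}$, which settles the distance claim in all four cases.

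For the path counts, the four cases fall out of straightforward bookkeeping. In Case~1 no $v$-routed path can be shortest, so $\sigma'_{st} = \sigma_{st}$. In Case~4 every shortest path in $G'$ is $v$-routed, giving $\sigma'_{st} = \sigma'_{sv}\cdot \sigma_{vt}$ by concatenation with the unchanged shortest paths out of $v$. In Case~3, where $d(s,v) > d'(s,v)$, every shortest $s$-to-$v$ path in $G'$ must use some edge of $E_i(v)$; moreover $d(s,v)+d(v,t) > D(s,t)=d(s,t)$, so no $G$-shortest $s$-to-$t$ path passes through $v$. Hence the $\sigma'_{sv}\cdot \sigma_{vt}$ new $v$-routed shortest paths are disjoint from those already counted in $\sigma_{st}$, yielding $\sigma'_{st} = \sigma_{st} + \sigma'_{sv}\cdot \sigma_{vt}$. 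In Case~2, where $d(s,v) = d'(s,v)$, the $\sigma_{sv}\cdot \sigma_{vt}$ shortest $v$-routed paths of $G$ are already counted in $\sigma_{st}$; the only genuinely new shortest $s$-to-$t$ paths are those whose $s$-to-$v$ prefix uses an edge of $E_i(v)$, and by the definition of $\newpaths$ (exactly what Algorithm~\ref{algo:compute-dist-vert} maintains on line~\ref{alg4:updnewpath}) there are $\newpaths$ such prefixes, giving $\sigma'_{st} = \sigma_{st} + \newpaths\cdot \sigma_{vt}$.

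The main obstacle is the bookkeeping in Case~2: one must verify that the $\newpaths \cdot \sigma_{vt}$ new paths are disjoint from the $\sigma_{st}$ old ones, and that no shortest $s$-to-$v$ prefix in $G'$ is accidentally counted both in $\sigma_{sv}$ and in $\newpaths$. The first holds because every new path uses an edge of $E_i(v)$ and so cannot have existed as a $G$-path, while every path counted in $\sigma_{st}$ is a valid $G$-path. The second holds because $\newpaths$ only counts shortest $s$-to-$v$ prefixes that traverse an updated edge, whereas $\sigma_{sv}$ counts $G$-shortest paths, which cannot use any updated edge (since an updated edge contributes the strictly smaller weight $\weight'$ only in $G'$). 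Once this is cleared up, the lemma follows directly from the partition of shortest $s$-to-$t$ paths in $G'$ described above.
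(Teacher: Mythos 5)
Your proof is correct and follows the same route the paper intends: the paper omits a proof of this lemma, stating only that it is ``similar to Lemma~\ref{lem:update-st},'' and your partition of the $G'$-shortest $s$--$t$ paths by whether they use an edge of $\Ein(v)$ (hence pass through $v$ with a length-$d'(s,v)$ prefix and unchanged length-$d(v,t)$ suffix) is exactly the natural extension of that edge-update argument. Your explicit verification in Case~2 that the $\newpaths\cdot\sigma_{vt}$ new paths are disjoint from the $\sigma_{st}$ old ones is a worthwhile detail the paper leaves implicit.
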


\item[(C)] {\bf Compute $\DAG'(s)$ for every $s$:} Assuming that we have computed $d'(s,t)$ and $\sigma'(s,t)$ for
all $s, t \in V$, we now show how to obtain the updated DAGs. Note that we can readily compute
the value $flag(s,t)$ for every $s, t$, using the updated distances and number of shortest paths.
The algorithm to compute $\DAG'(s)$ for any $s \in V$ is very similar to Algorithm~\ref{algo:update-dag} in
the edge update case. The only modification to  Algorithm~\ref{algo:update-dag} we need is in
 Steps 8--9 where we consider a single edge $(u, v)$ -- instead in this case 
we consider every edge in $\Ein(v)$.

\item[(D)]{\bf Compute $\DAG'_R(s)$ for every $s$:} 
We now need to update $\DAG_R(s)$, for every $s$, for which we use Algorithm~\ref{algo:update-rdag}.
This algorithm requires for every $t \in V$, a set $R_t$ which is defined as follow:
\begin{eqnarray}
{R_t =  \{(a,t) \ | \ (t,a) \in \DAG'(t) \textrm{ and } \weight'(t,a) + d'(a,v) = d'(t,v) \}}
\end{eqnarray}
The set $R_t$ is the set of (reversed) outgoing edges from $t$ in $\DAG'(t)$ that lie on a shortest path from $t$ to $v$ in $G'$.

Consider an
 edge $e=(a,b)$ in the updated $\DAG'_R(s)$.  If $e$ is in $\DAG_R(s)$, it is added to $\DAG'_R(s)$ by Steps \ref{rdag:for1start}--\ref{rdag:step1}. If $e$ lies on a new shortest path present only in $\Grev'$,
 its reverse must also  lie on a shortest path that goes through $v$ in $G'$, and it will be added to $\DAG'_R(s)$ by the $R_b$ during Steps \ref{rdag:for2start}--\ref{rdag:step2}. ($R_b$ could also contain edges on old shortest paths through $v$  already processed in Steps \ref{rdag:for1start}--\ref{rdag:step1}, but even in that case each edge 
 is added to $\DAG'_R(s)$ at most twice by Algorithm \ref{algo:update-rdag}.)
Note that we do not need to process edges $(u_j,v)$
in $E_i$ separately (as with edge $(u,v)$ in Algorithm 2),
because these edges will be present in the relevant $R_{u_j}$.
The correctness of Algorithm~\ref{algo:update-rdag} follows from 
Lemma~\ref{lem:update-rdag}. This lemma similar to Lemma~3  and its proof
is in the Appendix.
\begin{lemma}
\label{lem:update-rdag}
An edge $(a,b) \in X$ if and only if $(a,b) \in \DAG'_R(s)$ after the incremental update of the set $\Ein(v)$.
\end{lemma}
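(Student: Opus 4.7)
The plan is to mirror the proof of Lemma~\ref{lem:dag-correct} for both directions, exploiting two observations specific to the vertex-update setting: (i)~because $\Ein(v)$ contains only edges incoming to $v$, the SSSP DAG rooted at $v$ is preserved, so $d'(v,x) = d(v,x)$ and $\DAG'(v) = \DAG(v)$ for every $x$; and (ii)~any path in $G'$ that uses an edge of $\Ein(v)$ must pass through $v$, and by optimal substructure such a shortest $b$-to-$s$ path in $G'$ decomposes into a shortest $b$-to-$v$ prefix in $G'$ concatenated with a shortest $v$-to-$s$ suffix, where the latter is identical in $G$ and $G'$ by (i). Combining these yields the key identity $d'(b,s) = d'(b,v) + d(v,s)$ whenever a new shortest $b$-to-$s$ path in $G'$ exists.

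For the ``only if'' direction, suppose $(a,b) \in X$. If it was added in Steps~\ref{rdag:for1start}--\ref{rdag:step1}, then $(a,b) \in \DAG_R(s)$ and $flag(b,s) \in \{\mbox{UN-changed}, \mbox{NUM-changed}\}$, so $d'(b,s) = d(b,s)$ and the $G$-shortest $b$-to-$s$ path through $(b,a)$ remains shortest in $G'$, placing $(a,b) \in \DAG'_R(s)$. If it was added in Steps~\ref{rdag:for2start}--\ref{rdag:step2}, then $(a,b) \in R_b$ (so $(b,a)$ starts a shortest $b$-to-$v$ path in $G'$) and $flag(b,s) \in \{\mbox{NUM-changed}, \mbox{WT-changed}\}$; the flag guarantees at least one shortest $b$-to-$s$ path in $G'$ that uses an $\Ein(v)$ edge and thus passes through $v$, so by observation~(ii) concatenating the $R_b$ prefix with any shortest $v$-to-$s$ path produces a shortest $b$-to-$s$ path in $G'$ containing $(b,a)$. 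For the ``if'' direction, pick any shortest $b$-to-$s$ path $\pi'$ in $G'$ containing $(b,a)$. If $\pi'$ uses no $\Ein(v)$ edge, then $\pi'$ is present in $G$ with length $d'(b,s) \leq d(b,s)$, so $\pi'$ is shortest in $G$ too, giving $(a,b) \in \DAG_R(s)$ and $flag(b,s) \in \{\mbox{UN-changed}, \mbox{NUM-changed}\}$, so Step~\ref{rdag:step1} adds it. Otherwise $\pi'$ passes through $v$; its prefix to $v$ is a shortest $b$-to-$v$ path in $G'$ beginning with $(b,a)$, hence $(a,b) \in R_b$, and since $\pi'$ is absent from $G$ we have $d'(b,s) < d(b,s)$ or $\sigma'_{bs} > \sigma_{bs}$, making $flag(b,s)$ WT- or NUM-changed, so Step~\ref{rdag:step2} adds it.

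The main obstacle is the forward direction through $R_b$: one must show that membership in $R_b$ together with a suitable flag value forces $(b,a)$ to lie on some $G'$-shortest path all the way to $s$, not merely to $v$. The bridge is the identity $d'(b,s) = d'(b,v) + d(v,s)$, which follows from observations~(i)--(ii) applied to any new shortest $b$-to-$s$ path through $v$, and whose existence is certified precisely by the NUM- or WT-changed flag. A minor edge case is $b = v$, where $R_v = \emptyset$ (since $d'(v,v) = 0$ and weights are positive) and $flag(v,s) = \mbox{UN-changed}$ by (i), so only Step~\ref{rdag:step1} acts, consistent with the agreement of $\DAG_R(s)$ and $\DAG'_R(s)$ on edges incident to $v$.
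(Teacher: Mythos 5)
Your proof is correct and follows essentially the same route as the paper's: both directions reduce to a case analysis on $flag(b,s)$ together with the fact that every new shortest $b$-to-$s$ path in $G'$ decomposes at $v$ into a shortest $b$-to-$v$ prefix (exactly what $R_b$ records) and an unchanged shortest $v$-to-$s$ suffix. Your explicit identity $d'(b,s)=d'(b,v)+d(v,s)$ and the $b=v$ edge case are welcome refinements of the same argument rather than a different approach.
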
 
\REM{
\begin{proof}
Suppose $(a,b) \in \DAG'_R(s)$. This implies that there is at least one shortest from $s$ to $b$ in $\Grev'$ that uses 
the edge $(a,b)$. And similarly a shortest path from $b$ to $s$ that uses the edge $(b,a)$ in $G'$. To show that $(a,b) \in X$ we consider the possible values for $flag(b,s)$:
\begin{enumerate}
\item If $flag(b,s) = \mbox{UN-changed}$ then the set of shortest paths from $b$ to $s$ in $G$ and $G'$ are
the same, so there is a shortest path from $b$ to $s$ that uses the edge $(b,a)$ in $G$. Further, since $\DAG_R(s)$ is correctly kept, the edge $(a,b)$ is added to $X$ from Step 5.
\item If $flag(b,s) = \mbox{WT-changed}$ then every shortest path from $b$ to $s$ in $G'$ goes through the updated vertex $v$. This implies that any shortest path $\pi'_{bs} \in G'$ is of the form $\pi'_{bs} = b \leadsto v \leadsto s$. Similarly every shortest path from $s$ to $b$ in $\Grev'$ goes through $v$ and, since $(a,b) \in \DAG'_R(s)$, at least one shortest path in $\Grev'$ is of the form $s \leadsto v \leadsto a \rightarrow b$. Therefore a shortest path of the form $b \rightarrow a \leadsto v \leadsto s$ is in $G'$, and the edge $(b,a)$ must be one of the outgoing edges of $b$ in $\DAG'(b)$ that lies on a shortest path from $b$ to $v$ in $G'$. So $(a,b) \in R_b$. Thus the edge $(a,b)$ is added to $X_s$ by Step 8.
\item If $flag(b,s) = \mbox{NUM-changed}$ then, there exist at least one path from $b$ to $s$ in $G'$ that goes through $v$, and additionally there can be shortest paths from $b$ to $s$ in $G'$ that don't go through $v$. Since, $(a, b) \in \DAG'_R(s)$, the edge $(b, a)$ lies on one or both types of paths. Suppose $(b,a) \in \pi_{bs}$, which does not use $v$. Then the path $\pi_{bs}$ is a shortest
path in $G$ and hence $(a, b) \in \DAG_R(s)$. In this case, $(a,b)$ is added to $X$ by Step~5.
If $(a, b) \in \pi'_{bs}$ which contains $v$, then since $\pi'_{bs}$ uses the vertex $v$, 
we know that $\pi'_{bs} = b \leadsto v \leadsto s$.
Hence, similar to the case (2) above, we conclude that $(a,b) \in R_b$.
Therefore the edge $(a, b)$ is added to $X$ by Step~8.
\end{enumerate}
Suppose  edge $(a,b) \in X$. To show that $(a, b) \in \DAG'_R(s)$ 
we consider the different steps in Algorithm~\ref{algo:update-rdag} where $(a,b)$ can be added
to $X$: 
\begin{enumerate}
\item The edge $(a,b)$ is added to $X$ by Step~5. This implies that the edge ${(a,b) \in \DAG_R(s)}$. Thus, there exists a shortest path in $\Grev$ of the form $s \leadsto a \rightarrow b$. Therefore, there exists a shortest path  in $G$ from $b$ to $s$, say $\pi_{bs} = b \rightarrow a \leadsto s$. Note that we execute Step~4 when  $flag(b,s) = \mbox{UN-changed}$ or $flag(b, s) = \mbox{NUM-changed}$.
For either value of the flag every shortest path from $b$ to $s$ in $G$ is also shortest path in $G'$. 
Therefore, the path ${\pi'_{sb} = s \leadsto a \rightarrow b}$ is a shortest path in $\Grev'$ and hence the edge $(a,b) \in \DAG'_R(s)$.
\item The edge $(a,b)$ is added to $X$ by Step~8. Thus, $(a,b) \in R_b$. This implies that the edge $(b,a)$ is on a shortest path in $G'$ from $b$ to $v$. Moreover, we add $(a,b)$ in Step~8 when $flag(b, s) = \mbox{NUM-changed}$ or $flag(b,s) = \mbox{WT-changed}$.
Therefore, there exists at least one shortest path from $b$ to $s$ in $G'$ that goes through $v$. Thus, since the edge $(b,a)$ is on a shortest path from $b$ to $v$ in $G'$, then it is also on at least a shortest path from $b$ to $s$ in $G'$. Therefore, $(a,b)$ is in at least one shortest path from $s$ to $b$ in $\Grev'$, and this establishes that $(a,b) \in \DAG'_r(s)$.
\end{enumerate}
This completes the proof of the lemma.
\qed
\end{proof}
}
\end{enumerate}

At this point, after having Steps~(A)--(D) above executed, we have processed the updates in $\Ein(v)$ and obtained the modified distances $d'(\cdot)$, modified counts $\sigma'_{(\cdot)}$ and $\DAG'(s)$
for every $s$. In addition, we have obtained the modified reverse DAG for every $s$. To process the updates in $\Eout(v)$,
we work with $G_R'$. Since we are processing incoming edges in $G_R'$, our earlier steps apply 
unchanged, and we
obtain modified values for $d''(\cdot)$, $\sigma''_{(\cdot)}$, and $\DAG''_R(s)$ for every $s$. Finally,
using Algorithm~\ref{algo:update-rdag} we obtain the $\DAG''(s)$ for every $s$.
This completes our vertex update and to compute the updated BC values, we apply Brandes' accumulation technique (Algorithm~\ref{algo:accumulate}).

\paragraph{Complexity:} 
Computing $d'(s,v),\sigma'_{sv}$ and $\hat{\sigma}'_{sv}$  requires time ${O(|\Ein(v)|) = O(n)}$ for each $s$, and hence $O(n^2)$ time 
for all sources.
Applying Lemma \ref{lem:update-st-vert} 
to all pairs of vertices takes
time $O(n^2)$. 
For any $s$, the complexity of modified Algorithm~\ref{algo:update-dag} becomes $O(m_s^{*}+  \mStar_v +n)$ which
leads to a total of $O(m' \cdot n + n^2)$.

Creating a set $R_t$ 
requires at most 
${O(E^*\cap \{\textrm{outgoing edges of }t\})}$, 
so the overall complexity for all 
the sets is $O(m^*)$. Finally, we bound the complexity 
of Algorithm \ref{algo:update-rdag}: 
the algorithm adds $(a,b)$ in a 
reverse DAG edge set $X$ at most twice. 
Since $\sum_{s \in V}{|E(\DAG'(s))|} = \sum_{s \in V}{|E(\DAG'_R(s))|}$, 
at most ${O(\mVmax n)}$ edges can be inserted into all the sets $X$ 
when Algorithm~\ref{algo:update-rdag} is executed
over all sources. Finally, since applying the updates 
in $\Eout(v)$  requires a symmetric 
procedure starting from the reverse DAGs, the final complexity bound of $O(m'~\cdot~n~+~n^2)$ follows.

\section{Cache-oblivious Implementation}
\label{sec:cache-obv}
Our incremental algorithms have efficient cache-oblivious implementaions.
We  describe here an efficient  cache-oblivious implementation of 
Algorithm \ref{algo:update-bc} (Incremental-BC). 
As noted in Section~\ref{sec:intro}, 
$scan(r) = r/B$ and $sort(r) = (r/B) \log_M r$ are desirable cache-efficient
bounds for a computation of size $r$.

Let $V=\{1, \ldots, n\}$.
Further, assume that for an $s, t$ pair, we have the values 
$d(s,t), \sigma(s,t), flag(s,t)$ stored in
an array $\A[1..n^2]$ ordered by the first component
$s$ and then by the second component $t$ (i.e.,  stored as an $n\times n$ 
row major 
matrix).

To perform Step 2 of Algorithm \ref{algo:update-bc} using a scan, we extract 
the subarray $\A_{v \bullet}$
containing the entries with $v$ as the first component, and 
the subarray $\A_{\bullet u}$ containing the entries with $u$ as the second component.
We scan the three arrays $\A, \A_{v \bullet}$, and $\A_{\bullet u}$ 
to compute the updated values $d', \sigma'$,
and $flag$ for each pair $s,t$ in the order they appear in $\A$. 
These are stored in $\A'[1..n^2]$.
Overall, Step 2 incurs $O(scan(n^2))$ cache misses.
We execute the {\bf for} loop in Step~3 in increasing order of $s\in V$,
and hence we will access successive segments of $\A'$.

For Step 4, we sort the edges $(a,b)$ in $\DAG(v)$ and in
$\DAG(s)$ in nondecreasing order  
of $b$. Across all $s\in V$ this can be performed in
$n \cdot sort(\mVmax)$ cache misses.
Then, Algorithm \ref{algo:update-dag} can be executed in $scan(n^2) + n \cdot scan(\mStarV)$ cache misses
across all sources since we only need $flag(s,b)$ when we examine
edge $(a,b)$ in $\DAG(s)$.

Instead of the
reverse topologically sorted order for the stack $S$ in Step 5
we use nonincreasing order of $d'(s,t)$, $t\in V$ (as in the Brandes
static algorithm). This is computed in $sort(m_s^*)$ cache misses for
source $s$ and $n \cdot sort(\mVmax)$ across all sources.  

In the final step, Step~6, we execute Algorithm \ref{algo:accumulate}. 
As input to this algorithm, for a given $s$, we need to generate for 
every $w \in V$, the predecessor list $P_s(w)$. For this,
for every $v \in P_s(w)$, we store
the value $d'(s,v)$. This computation is done by sorting the edges $(a, b) \in DAG(s)$
by the first component $s$. This sorted list will be a subsequence of the
row for $s$ in $\A'[1..n^2]$,
and $d'(s,v)$ can be copied to each edge $(v, w)$ in $scan(m_s^{*})$ cache misses.
The $P_s(v)$ lists are then generated with another sort, and each entry
in the predecessor list will contain the associated $d'$ value.
Over all sources $s$, this computation takes $O(scan(n^2) + n \cdot sort(\mVmax))$ 
cache misses. 

Having generated the predecessor lists, we need to execute 
Algorithm \ref{algo:accumulate} for each source $s$.
The
cache-oblivious implementation of Algorithm \ref{algo:accumulate} 
is somewhat different from the earlier pseudocode. We use
an optimal cache-oblivious max-priority queue $Z$ \cite{ArgeBDHM07}, that is
initially empty.
Each element in $Z$ has an ordered pair $(d'(s,v), v)$ as its key value, 
and also has auxiliary data as described below.
Consider the execution of Step~4 in Algorithm \ref{algo:accumulate} for vertices
$v\in P_s(w)$. 
Instead of computing the contribution of $w$ to $\delta_{s\bullet}(v)$ for each
$v\in P_s(w)$ when $w$ is processed, we insert an element into $Z$ with
key value $(d'(s,v),v)$ and auxiliary data 
$(w,\sigma_{sw}, \delta_{s\bullet}(w))$.
With this scheme, entries will be extracted from $Z$ in nonincreasing values of 
$d'(s,v)$, and all entries for a given $v$ will be extracted consecutively.
We compute $\delta_{s\bullet}(v)$ as these extractions occur from $Z$. Note
that the stack $S$ is needed only 
to identify the sinks in $\DAG(s)$ (which will have no entry in $Z$).
So, we can dispense with $S$, and instead initially insert an
element with key value $(d'(s,t),t)$ and NIL auxiliary data for each
sink $t$ in $\DAG(s)$.
Using~\cite{ArgeBDHM07}, Step 6 takes $sort(m_s^*)$ cache misses
for source $s$, hence over all sources,
Step 6 takes $O(n \cdot sort (\mVmax))$.


Thus, the overall 
cache-oblivious incremental algorithm for betweenness centrality incurs
$O(scan(n^2) +  n \cdot sort(m'))$ cache misses, where $m'= \mVmax + m^*_v$.
This is a significant 
improvement over any algorithm that computes shortest paths, as is the case
with the static Brandes algorithm,  since all of 
the known algorithms for shortest paths 
face the bottleneck of unstructured accesses to adjacency lists 
(see, e.g., \cite{MehlhornM02}).

\section{Static betweenness centrality}
\label{sec:static}
In this section we present static algorithms that compute betweenness 
centrality faster than the Brandes algorithm.
We first consider an algorithm that is based on the Hidden Paths algorithm by
Karger et al.~\cite{KKP93} together with Brandes' accumulation technique. 
The Hidden Path algorithm runs Dijkstra's 
SSSP in parallel from each vertex. It identifies all pairs shortest paths while
only examining the edges in $E^*$, the set of edges that actually lie
on some shortest path.
A similar algorithm with the same running time of 
$O(\mStar n+n^2\log n)$ was developed independently by McGeoch~\cite{McGeoch95}; here $m^* = |E^*|$.

Our Static-BC algorithm is presented as Algorithm~\ref{algo:static}.
In Step 1 we run the Hidden Paths algorithm to compute $E^*$ as well as the 
shortest path distances for every pair of vertices.
This is the step with the dominant cost, 
while in Steps 2--8 the complexity is strictly related to the size of $E^*$. In Steps 2--4, we identify the edges in each shortest path DAG and in each $P_s(v)$: for every edge $(u,v) \in E^{\ast}$, 
if ${d(s,u)+\weight(u,v) = d(s,v)}$, then we add the edge $(u,v)$ 
to $\DAG(s)$ and the vertex $u$ to $P_s(v)$.
The overall time spent for constructing the DAGs and the predecessor lists is bounded by $O( \mStar n )$.
Step~\ref{static-comp-dist} counts the number of shortest paths from $s$ to $v$ for all $v \in V$,  
by traversing $\DAG(s)$ according to the topological order of its vertices, maintained in the double-ended queue $Q$ (created in Step 6) used as a queue. We accumulate the path counts for a vertex $v$ according to the formula $\sigma_{sv} = \sum_{(u,v) \in \DAG(s)}\sigma_{su}$.
This takes time linear in the size of $\DAG(s)$. Therefore, across all sources, we spend time which is bounded by 
$O(n^2 + \sum_{s \in V}{m^{\ast}_s}) = O(\mVmax n + n^2)$.
Finally in 
Step 8, using $Q$ as a stack (reverse topological order), we call Accumulate-dependency($s,Q$) (Algorithm~\ref{algo:accumulate}) to 
accumulate dependencies. Thus the overall running time of this static BC
algorithm is $O(m^*n + n^2 \log n)$.
The correctness of the algorithm follows from the correctness of
the Hidden Paths algorithm and the Brandes' accumulation technique.

\begin{algorithm}
\begin{algorithmic}[1]
\STATE Using the Hidden Paths algorithm, compute $E^{\ast}$, and $d(s,t)$ for every $s,t \in V$
\FOR {each node $s \in V$}
\FOR {each $(u,v) \in E^*$}
\STATE {\bf if} $d(s,u)+\weight(u,v)=d(s,v)$ {\bf then} add $(u,v)$ to $\DAG(s)$ and $u$ to $P_s(v)$
\ENDFOR
\ENDFOR
\FOR {each DAG$(s)$}
\STATE compute a dequeue $Q$ containing the nodes of $\DAG(s)$ in topological order 
\STATE for all $v\in V$, compute  $\sigma_{sv} = \sum_{(u,v) \in \DAG(s)}\sigma_{su}$ by accumulating path counts on vertices extracted from $Q$ in queue order (topological order)\label{static-comp-dist}
\STATE Accumulate-dependency($s,Q$), using $Q$ in stack order \hspace{.10in} //use Algorithm~\ref{algo:accumulate}
\ENDFOR 
\end{algorithmic}
\caption{Static-BC($G=(V,E)$)} 
\label{algo:static}
\end{algorithm}

Algorithm \ref{algo:static} can be expected to run faster than
Brandes' on many graphs since $m^*$ is often much smaller than $m$.
However, its worst-case running time is asymptotically the same as Brandes'.
We observe that if we replace the Hidden Paths algorithm in Step 1 of
Algorithm \ref{algo:static} with any other APSP algorithm that identifies
a set of edges $E'\supseteq E^*$, we can use $E'$ in place of $E$ in Step 3,
and obtain a correct static BC algorithm that runs in time 
${O(m'n+n^2 + T')}$, where $m' = |E'|$ and $T'$ is the running time of
the APSP algorithm used in Step 1.
In particular, if we use
one of the faster APSP
 algorithms for positive real-weighted graphs
(Pettie~\cite{Pettie04} for directed graphs or \cite{PR05} for
 undirected graphs) in Step 1, we can obtain asymptotically 
faster BC algorithms than Brandes' by using $E'=E$.
With Pettie's algorithm~\cite{Pettie04}, we obtain an $O(mn + n^2 \log\log n)$
time algorithm for static betweenness centrality in directed graphs, and 
with the algorithm of Pettie and Ramachandran~\cite{PR05}, we obtain a
static betweenness centrality algorithm for undirected graphs that runs in
$O(mn \cdot \log \alpha(m,n))$, where $\alpha$ is an inverse-Ackermann
function.

\bibliographystyle{abbrv}
\bibliography{references}

\begin{thebibliography}{10}

\bibitem{ArgeBDHM07}
L.~Arge, M.~A. Bender, E.~D. Demaine, B.~Holland-Minkley, and J.~I. Munro.
\newblock An optimal cache-oblivious priority queue and its application to
  graph algorithms.
\newblock {\em SIAM J. Comput.}, 36(6):1672--1695, 2007.

\bibitem{BaderKMM07}
D.~A. Bader, S.~Kintali, K.~Madduri, and M.~Mihail.
\newblock Approximating betweenness centrality.
\newblock In {\em Proc. 5th WAW}, pages 124--137, 2007.

\bibitem{Brandes01}
U.~Brandes.
\newblock A faster algorithm for betweenness centrality.
\newblock {\em Journal of Mathematical Sociology}, 25(2):163--177, 2001.

\bibitem{Coffman}
T.~Coffman, S.~Greenblatt, and S.~Marcus.
\newblock Graph-based technologies for intelligence analysis.
\newblock {\em Commun. ACM}, 47(3):45--47, 2004.

\bibitem{Freeman77}
L.~C. Freeman.
\newblock A set of measures of centrality based on betweenness.
\newblock {\em Sociometry}, 40(1):35--41, 1977.

\bibitem{FG85}
A.~Frieze and G.~Grimmett.
\newblock The shortest-path problem for graphs with random arc-lengths.
\newblock {\em Discrete Applied Mathematics}, 10(1):57 -- 77, 1985.

\bibitem{GeisbergerSS08}
R.~Geisberger, P.~Sanders, and D.~Schultes.
\newblock Better approximation of betweenness centrality.
\newblock In {\em Proceedings of 10th ALENEX}, pages 90--100, 2008.

\bibitem{GreenMB12}
O.~Green, R.~McColl, and D.~A. Bader.
\newblock A fast algorithm for streaming betweenness centrality.
\newblock In {\em Proceedings of 4th PASSAT}, pages 11--20, 2012.

\bibitem{HZ85}
R.~Hassin and E.~Zemel.
\newblock On shortest paths in graphs with random weights.
\newblock {\em Mathematics of Operations Research}, 10(4):557 -- 564, 1985.

\bibitem{KKP93}
D.~R. Karger, D.~Koller, and S.~J. Phillips.
\newblock Finding the hidden path: Time bounds for all-pairs shortest paths.
\newblock {\em SIAM J. Comput.}, 22(6):1199--1217, 1993.

\bibitem{KasWCC13}
M.~Kas, M.~Wachs, K.~M. Carley, and L.~R. Carley.
\newblock Incremental algorithm for updating betweenness centrality in
  dynamically growing networks.
\newblock In {\em Proceedings of 2013 IEEE/ACM International Conference on
  Advances in Social Networks Analysis and Mining}, page to appear, 2013.

\bibitem{KA12}
N.~Kourtellis, T.~Alahakoon, R.~Simha, A.~Iamnitchi, and R.~Tripathi.
\newblock Identifying high betweenness centrality nodes in large social
  networks.
\newblock {\em Social Network Analysis and Mining}, pages 1--16, 2012.

\bibitem{Krebs02}
V.~Krebs.
\newblock Mapping networks of terrorist cells.
\newblock {\em CONNECTIONS}, 24(3):43--52, 2002.

\bibitem{Lee12}
M.-J. Lee, J.~Lee, J.~Y. Park, R.~H. Choi, and C.-W. Chung.
\newblock Qube: a quick algorithm for updating betweenness centrality.
\newblock In {\em Proc. 21st WWW Conference}, pages 351--360, 2012.

\bibitem{LRP89}
M.~Luby and P.~Ragde.
\newblock A bidirectional shortest-path algorithm with good average-case
  behavior.
\newblock {\em Algorithmica}, 4(1-4):551--567, 1989.

\bibitem{MadduriEJBC09}
K.~Madduri, D.~Ediger, K.~Jiang, D.~A. Bader, and D.~G. Chavarr\'{\i}a-Miranda.
\newblock A faster parallel algorithm and efficient multithreaded
  implementations for evaluating betweenness centrality on massive datasets.
\newblock In {\em IPDPS}, pages 1--8. IEEE, 2009.

\bibitem{McGeoch95}
C.~C. McGeoch.
\newblock All-pairs shortest paths and the essential subgraph.
\newblock {\em Algorithmica}, 13(5):426--441, 1995.

\bibitem{MehlhornM02}
K.~Mehlhorn and U.~Meyer.
\newblock External-memory breadth-first search with sublinear i/o.
\newblock In {\em ESA}, volume 2461, pages 723--735, 2002.

\bibitem{Pettie04}
S.~Pettie.
\newblock A new approach to all-pairs shortest paths on real-weighted graphs.
\newblock {\em Theoretical Computer Science}, 312(1):47 -- 74, 2004.

\bibitem{PR05}
S.~Pettie and V.~Ramachandran.
\newblock A shortest path algorithm for real-weighted undirected graphs.
\newblock {\em SIAM J. Comput.}, 34(6):1398--1431, 2005.

\bibitem{PCW05}
J.~W. Pinney, G.~A. McConkey, and D.~R. Westhead.
\newblock Decomposition of biological networks using betweenness centrality.
\newblock In {\em Proceedings of 9th International Conference on Research in
  Computational Molecular Biology}, 2005.

\bibitem{RR96}
G.~Ramalingam and T.~W. Reps.
\newblock On the computational complexity of dynamic graph problems.
\newblock {\em Theor. Comput. Sci.}, 158(1{\&}2):233--277, 1996.

\bibitem{SinghGIS13}
R.~R. Singh, K.~Goel, S.~Iyengar, and Sukrit.
\newblock A faster algorithm to update betweenness centrality after node
  alteration.
\newblock In {\em Proc. 10th WAW, to appear}, 2013.

\end{thebibliography}

\begin{appendix}

\section{Brandes' algorithm}
For completeness we present the Brandes algorithm here.

\label{sec:Brandes-steps}
\begin{algorithm}
\begin{algorithmic}[1]
\STATE {\bf for} every  $v \in V$ {\bf do} $\bc(v) \leftarrow 0$ \label{brandes-init}
\FOR {every $s \in V$}
        \STATE run Dijkstra's SSSP from $s$ and compute $\sigma_{st}$ and $P_s(t), 
 \forall \ t \in V \setminus \{s\}$ \label{brandes-dijkstras}
        \STATE store the explored nodes in a stack $S$ in non-increasing distance from $s$
        \STATE accumulate dependency of $s$ on all $t \in V \setminus {s}$ using Algorithm~\ref{algo:accumulate} \label{brandes-accum}
\ENDFOR
\end{algorithmic}
\caption{Betweenness-centrality($G=(V,E)$) (from \cite{Brandes01})}
\label{algo:brandes}
\end{algorithm}

\REM{
\section{Missing details from Section~\ref{sec:vert-update}}
\label{app:vert-update}
The following algorithm computes the new distances and number of shortest paths after a
vertex update. This algorithm was briefly described in Section~\ref{sec:vert-update}. 
\begin{algorithm}
\begin{algorithmic}[1]
\REQUIRE $\Ein(v)$ with updated weights $\weight'$ \\
\hspace{0.2in} $d(s,t)$ and $\sigma_{st}$, $\forall \ s, t \in V$
\ENSURE $d'(s,v), \sigma'_{sv}, \newpaths$
\STATE $\hat{\sigma}'_{sv} \leftarrow 0$, $\sigma'_{sv} \leftarrow \sigma_{sv}$, $currdist \leftarrow d(s,v)$ 
\FOR {each edge $(u_i,v) \in \Ein(v)$}
	\IF {$currdist = d(s,u_i) + \weight'(u_i,v)$}
	\STATE $\sigma'_{sv} \leftarrow \sigma'_{sv} + \sigma_{su_i}$
	\STATE $\hat{\sigma}'_{sv} \leftarrow \hat{\sigma}'_{sv} + \sigma_{su_i}$
	\ELSIF {$ currdist > d(s,u_i) + \weight'(u_i,v)$}
	\STATE $currdist \leftarrow d(s,u_i) + \weight'(u_i,v)$
	\STATE $\sigma'_{sv} \leftarrow \sigma_{su_i}$		
	\ENDIF	
\ENDFOR
\STATE $d'(s,v) \leftarrow currdist$
\end{algorithmic}
\caption{Compute-Dist-to-v $(s, \Ein(v))$}
\label{algo:compute-dist-vert}
\end{algorithm}
}

\section{Proof of Lemma~\ref{lem:update-rdag}}
Suppose $(a,b) \in \DAG'_R(s)$. This implies that there is at least one shortest from $s$ to $b$ in $\Grev'$ that uses
the edge $(a,b)$. And similarly a shortest path from $b$ to $s$ that uses the edge $(b,a)$ in $G'$. To show that $(a,b) \in X$ we consider the possible values for $flag(b,s)$:
\begin{enumerate}
\item If $flag(b,s) = \mbox{UN-changed}$ then the set of shortest paths from $b$ to $s$ in $G$ and $G'$ are
the same, so there is a shortest path from $b$ to $s$ that uses the edge $(b,a)$ in $G$. Further, since $\DAG_R(s)$ is correctly kept, the edge $(a,b)$ is added to $X$ from Step~\ref{rdag:step1}.
\item If $flag(b,s) = \mbox{WT-changed}$ then every shortest path from $b$ to $s$ in $G'$ goes through the updated vertex $v$. This implies that any shortest path $\pi'_{bs} \in G'$ is of the form $\pi'_{bs} = b \leadsto v \leadsto s$. Similarly every shortest path from $s$ to $b$ in $\Grev'$ goes through $v$ and, since $(a,b) \in \DAG'_R(s)$, at least one shortest path in $\Grev'$ is of the form $s \leadsto v \leadsto a \rightarrow b$. Therefore a shortest path of the form $b \rightarrow a \leadsto v \leadsto s$ is in $G'$, and the edge $(b,a)$ must be one of the outgoing edges of $b$ in $\DAG'(b)$ that lies on a shortest path from $b$ to $v$ in $G'$. So $(a,b) \in R_b$. Thus the edge $(a,b)$ is added to $X$ by Step~\ref{rdag:step2}.
\item If $flag(b,s) = \mbox{NUM-changed}$ then, there exist at least one path from $b$ to $s$ in $G'$ that goes through $v$, and additionally there can be shortest paths from $b$ to $s$ in $G'$ that do not go through $v$. Since, $(a, b) \in \DAG'_R(s)$, the edge $(b, a)$ lies on one or both types of paths. Suppose $(b,a) \in \pi_{bs}$, which does not use $v$. Then the path $\pi_{bs}$ is a shortest
path in $G$ and hence $(a, b) \in \DAG_R(s)$. In this case, $(a,b)$ is added to $X$ by Step~\ref{rdag:step1}.
If $(a, b) \in \pi'_{bs}$ which contains $v$, then since $\pi'_{bs}$ uses the vertex $v$,
we know that $\pi'_{bs} = b \leadsto v \leadsto s$.
Hence, similar to the case (2) above, we conclude that $(a,b) \in R_b$.
Therefore the edge $(a, b)$ is added to $X$ by Step~\ref{rdag:step2}.
\end{enumerate}
Suppose  edge $(a,b) \in X$. To show that $(a, b) \in \DAG'_R(s)$
we consider the different steps in Algorithm~\ref{algo:update-rdag} where $(a,b)$ can be added
to $X$:
\begin{enumerate}
\item The edge $(a,b)$ is added to $X$ by Step~\ref{rdag:step1}. This implies that the edge ${(a,b) \in \DAG_R(s)}$. Thus, there exists a shortest path in $\Grev$ of the form $s \leadsto a \rightarrow b$. Therefore, there exists a shortest path  in $G$ from $b$ to $s$, say $\pi_{bs} = b \rightarrow a \leadsto s$. Note that we execute Step~\ref{rdag:step1} when  $flag(b,s) = \mbox{UN-changed}$ or $flag(b, s) = \mbox{NUM-changed}$.
For either value of the flag every shortest path from $b$ to $s$ in $G$ is also shortest path in $G'$.
Therefore, the path ${\pi'_{sb} = s \leadsto a \rightarrow b}$ is a shortest path in $\Grev'$ and hence the edge $(a,b) \in \DAG'_R(s)$.
\item The edge $(a,b)$ is added to $X$ by Step~\ref{rdag:step2}. Thus, $(a,b) \in R_b$. This implies that the edge $(b,a)$ is on a shortest path in $G'$ from $b$ to $v$. Moreover, we add $(a,b)$ in Step~\ref{rdag:step2} when $flag(b, s) = \mbox{NUM-changed}$ or $flag(b,s) = \mbox{WT-changed}$.
Therefore, there exists at least one shortest path from $b$ to $s$ in $G'$ that goes through $v$. Since 
the edge $(b,a)$ is on a shortest path from $b$ to $v$ in $G'$, the edge $(b, a)$  lies on at least one shortest path from $b$ to $s$ in $G'$. 
Therefore, $(a,b)$ lies on at least one shortest path from $s$ to $b$ in $\Grev'$, and this establishes that $(a,b) \in \DAG'_R(s)$.
\end{enumerate}
This completes the proof of the lemma.
\end{appendix}

\end{document}